\documentclass[letterpaper, 10 pt, conference]{ieeeconf}  % Comment this line out if you need a4paper

\IEEEoverridecommandlockouts                              % This command is only needed if 
\usepackage{comment}
\usepackage{graphics} % for pdf, bitmapped graphics files
\usepackage{epsfig} % for postscript graphics files
\usepackage{times} % assumes new font selection scheme installed
\usepackage{amsmath} % assumes amsmath package installed
\usepackage{amssymb}  % assumes amsmath package installed
\usepackage{amsfonts}
\usepackage{color}
\usepackage{algorithm,algpseudocode}
\usepackage{soul}
\usepackage{tikz}
\usepackage{tikz-network}
\usepackage{url}
\usepackage{csquotes}
\usetikzlibrary{backgrounds}
\tikzset{background rectangle/.style={fill=yellow!30}}
\usetikzlibrary{arrows.meta, positioning}

\let\labelindent\relax
\usepackage{enumitem}  
\setlist[description]{leftmargin=\parindent}
\usepackage[mathscr]{eucal}
\usepackage{subfigure}

\usepackage{theorem}
\newtheorem{theorem}{Theorem}
\newtheorem{definition}{Definition}
\newtheorem{proposition}{Proposition}
\newtheorem{lemma}{Lemma}

\newtheorem{assumption}{Assumption}
\newtheorem{problem}{Problem}

{\theorembodyfont{\upshape}
	\newtheorem{remark}{Remark}
	\newtheorem{example}{Example}
	
}

\def\qed{\hfill \vrule height 5pt width 5pt depth 0pt \medskip}

\def\im{{\rm Im}}

\newcommand{\Am}{\mathcal{A}}
\newcommand{\Bm}{\mathcal{B}}
\newcommand{\Cm}{\mathcal{C}}
\newcommand{\Dm}{\mathcal{D}}
\newcommand{\Em}{\mathcal{E}}
\newcommand{\Gm}{\mathcal{G}}
\newcommand{\Hm}{\mathcal{H}}
\newcommand{\Imin}{\mathcal{I}}
\newcommand{\Jm}{\mathcal{J}}
\newcommand{\Km}{\mathcal{K}}

\newcommand{\Pm}{\mathcal{P}}

\newcommand{\Sm}{\mathcal{S}}
\newcommand{\Tm}{\mathcal{T}}
\newcommand{\Vm}{\mathcal{V}}
\newcommand{\Wm}{\mathcal{W}}

\newcommand{\Zm}{\mathcal{Z}}

\newcommand{\coloneqq}{\mathrel{\mathop:}=}

\newcommand{\beq}{\begin{equation}}
\newcommand{\eeq}{\end{equation}}
\newcommand{\beqa}{\begin{eqnarray}}
\newcommand{\eeqa}{\end{eqnarray}}
\newcommand{\beqan}{\begin{eqnarray*}}
\newcommand{\eeqan}{\end{eqnarray*}}

\newcommand{\bite}{\begin{itemize}}
\newcommand{\eite}{\end{itemize}}
\newcommand{\benu}{\begin{enumerate}}
\newcommand{\eenu}{\end{enumerate}}

\definecolor{darkpastelgreen}{rgb}{0.01, 0.75, 0.24}

\usepackage{placeins}
\usepackage{bm}
\usepackage{lipsum}
\algnewcommand{\LeftComment}[1]{\Statex \(\triangleright\) #1}

\title{\LARGE \bf
A Polynomial-Time Algorithm for the Minimal Input/Output Node Cardinality Disturbance Decoupling Problem via Output Feedback
}

\author{
\thanks{Work supported in part by the Swedish Research Council (grant n. 2024-04772 to C.A.) and by the ELLIIT framework program at Link\"oping University. }%
Luca Claude Gino Lebon, Jorke de Vlas, Johannes K. Fichte, and Claudio Altafini% <-this % stops a space
\thanks{L. C. G. Lebon and C. Altafini are with the Department of Electrical Engineering, Division of Automatic Control,
        Linköping University, SE-581 83 Linköping, Sweden
        {\tt\small luca.lebon@liu.se, claudio.altafini@liu.se}. Jorke de Vlas is with the Department of Computer and Information Science, Division of Artificial Intelligence and Integrated Computer Systems, Linköping University, SE-581 83 Linköping, Sweden {\tt\small jorke.de.vlas@liu.se}. Johannes K. Fichte is with CRIL (UMR 8188), Artois University and CNRS, F-62300 Lens, France {\tt\small johannes.fichte@cril.fr}.}}

\begin{document}

\maketitle
\thispagestyle{empty}
\pagestyle{empty}

%%%%%%%%%%%%%%%%%%%%%%%%%%%%%%%%%%%%%%%%%%%%%%%%%%%%%%%%%%%%%%%%%%%%%%%%%%%%%%%%
\begin{abstract}
In this paper, we propose an algorithm that solves, in polynomial-time, the minimal input/output node cardinality disturbance decoupling problem via output feedback. The optimal solution, intended as a minimal-cardinality pair of sets
of control inputs and measurement output nodes needed to achieve the decoupling, is computed by solving a modified min-cut/max-flow problem.
\end{abstract}

%%%%%%%%%%%%%%%%%%%%%%%%%%%%%%%%%%%%%%%%%%%%%%%%%%%%%%%%%%%%%%%%%%%%%%%%%%%%%%%%

\section{Introduction}
In this paper, we investigate the Disturbance Decoupling Problem via Output Feedback (DDPOF) for networked systems, where disturbances act on a subset of nodes and their effects must be isolated from a designated set of target nodes. For linear systems, disturbance decoupling is classically studied within geometric control theory \cite{basile1992controlled,wonham2013linear,trentelman2012control}. When the full state is not directly available for feedback, the problem can be addressed by designing an output feedback law of the form \(u=-Gy\), where \(y=Cx\) is a suitably selected output vector. Necessary and sufficient conditions for the solvability of the DDPOF can be expressed in terms of the existence of a subspace that contains the image of the disturbance matrix and is contained in the kernel of the target matrix, while being both controlled invariant and conditioned invariant \cite{hamano1975localization}.

In \cite{lebon2026geometric}, the DDP is reformulated for networked systems under the assumption that disturbances, targets, control inputs, and output measurements are associated with individual nodes, rather than arbitrary linear combinations of states. Under this node-based setting, graphical sufficient conditions for solvability are established in terms of properties of the network nodes. In particular, if the control input and measurement output nodes contain, respectively, the out-boundary and the in-boundary of a node set \(\Wm\), then \(\Wm\) is both a controlled and a conditioned invariant set of nodes. Consequently, the DDPOF is solvable whenever \(\Wm\) contains all disturbance nodes and is disjoint from the target nodes. Building on this characterization, \cite{lebon2026geometric} formulates the minimal input/output node cardinality DDPOF, which seeks to minimize the number of control input and measurement output nodes required to achieve disturbance decoupling. The optimal solutions are mathematically characterized, and a corresponding output feedback gain, known as a “friend” of \(\Wm\), is derived in closed form. The framework is also applied to disturbance decoupling in Power Networks in \cite{IFAC_26}.

Despite these results, an efficient algorithm for solving the minimal input/output node cardinality DDPOF has, to the best of our knowledge, remained unavailable, whereas polynomial-time algorithms have been developed for the corresponding minimal node cardinality state-feedback and dynamic-feedback problems in \cite{lebon2026geometric}. This paper closes this algorithmic gap by introducing an extended network that encodes the graphical constraints imposed by the optimality characterization of the minimal input/output node cardinality DDPOF, and then applying a modified min-cut/max-flow algorithm to identify an optimal solution. Related applications of flow-based methods to control problems include functional target controllability \cite{Commault2020} and the computation of the infinite structure of structured systems \cite{hovelaque1996analysis}. We prove that the proposed algorithm always terminates in polynomial-time and returns an optimal solution. A MATLAB implementation is available in \cite{mycode2026}, complementing the toolbox accompanying~\cite{lebon2026geometric}.

\section{Background material}
The material in this section is taken from standard textbooks on geometric control; see, e.g., \cite{basile1992controlled,trentelman2012control,wonham2013linear} and adapted for systems over networks \cite{lebon2026geometric}.

Consider a digraph $ \Gm = ( \Vm, \, \Em , A )$, where $ \Vm =\{ v_1, \, \ldots, v_n\}$ is a set of $n$ nodes and $\Em $ a set of $q$ edges over $ \Vm$, with the convention that $ (v_i , \, v_j) \in \Em $ means $ v_i \to v_j $, i.e., $ v_i $ is the tail and $ v_j $ is the head of the edge. 
$ A\in\mathbb{R}^{n\times n}$ is the adjacency matrix associated with $ \Gm$: $ A_{ji} $ is the weight of the edge $ (v_i , \, v_j) \in \Em $. 
Denote by $ \Am $ the class of adjacency matrices of $ \Gm$ having ``support" in $ \Em$: $\Am = \{ A \in\mathbb{R}^{n\times n} \; \text{ s.t.} \; A_{ji}\neq 0\;  \text{iff} \; (v_i , \, v_j) \in \Em\; \text{and} \; A_{ji}= 0\;  \text{iff} \; (v_i , \, v_j) \notin \Em \}$. Given a graph $ \Gm$, a subset $ \Zm \subseteq \Vm$ is called a terminal subset of nodes if there is no outgoing edge from $ \Zm$, i.e., for all edges $ (v_i , \, v_j) \in \Em $ if $ v_i \in \Zm $ then also $ v_j \in \Zm$.
A directed path $ \Pm$ from node $v_i$ to node $v_j$ is a sequence of distinct nodes $\{v_i, v_{i+1}, \dots , v_{j-1}, v_j\}$ s.t. $(v_k,v_{k+1})\in \mathcal{E}$ $\forall\,k\in\{i,i+1, \dots, j-1\}$: $\Pm = \{ v_i,  \dots , v_j \} $ (or $ \Pm^{v_i, v_j} $ when we need to specify the extremities of the path). 
The number of edges in the path is called the length of the path. 

In the following, we explicitly fix a basis of canonical vectors for a vector subspace associated with a set of nodes.
Let $ e_i $ be an elementary column vector (i.e., the vector having $1$ on th $i$-th slot and $0$ elsewhere).
\begin{definition}[\textbf{Subspace associated with a set of nodes}]\label{def:sub}
Given a set of nodes $\Zm\subseteq\Vm$, the vector subspace of $\mathbb{R}^n$ associated with $\Zm$ is $\textnormal{span}\{e_i\;\text{s.t.}\;v_i\in\Zm\}$.
\end{definition}
Throughout the paper, we identify $ \Zm $ and the associated vector subspace, and sometimes write $ x \in \Zm $. For any $ \Zm \subset \Vm$, the notation $ \Zm^\perp$ is used for both the complement set $ \Vm \setminus \Zm$ and the subspace orthogonal to (the vector space) $ \Zm $ in $ \mathbb{R}^n$. For each subspace $ \Zm$, $ \exists $ a full column rank matrix $ Z$ s.t. $ \im Z = \Zm $ and $ \ker Z =\{0\}$. Such matrix is called a basis matrix of $\Zm$. When $ \Zm $ is a set of nodes, we can take $ Z$ as a collection of elementary column vectors (i.e., one column $e_i$ for each node $ v_i \in \Zm$).
As a consequence, the sum of vector subspaces and their union can be identified: $ \Zm_1 + \Zm_2 \sim \Zm_1 \cup \Zm_2 $.
%Furthermore, if $ \Sm \subset \Vm$ is another set, then $ \Sm \setminus \ker (Z)^\perp = \Sm \cap \ker (Z)$. 
On $ \Gm = ( \Vm, \, \Em , A )$, for any subset of nodes $ \Wm\subseteq\Vm$ we can define the out-boundary as follows.
\begin{definition}\label{def:boundary}
The in-boundary of $\Wm\subseteq\Vm$ in the digraph $ \Gm = ( \Vm, \, \Em , A )$ is the set of nodes in $\Wm$ that have at least one outgoing edge leading to a node in $\Wm^\perp$, i.e., $\partial_-(\Wm,A):=\{v_i\in\Wm\;\text{s.t.}\;\exists\,(v_i,v_j)\in\Em,\, v_j\in\Wm^\perp\}$. Similarly, the out-boundary of $\Wm$ is the set of nodes in $\Wm^\perp$ that have at least one incoming edge from $\Wm$, i.e., $\partial_+(\Wm,A):=\{v_j\in\Wm^\perp\;\text{s.t.}\;\exists\,(v_i,v_j)\in\Em,\, v_i\in\Wm\}$.
\end{definition}
From the preceding definition, it directly follows that $\partial_\pm(\Wm,A)=\partial_\mp(\Wm^\perp,A^\top)$.

\subsection{Linear systems over networks}

In $ \Gm $, we can assume that two subsets of nodes play the role of inputs and outputs, denoted, respectively, $ \Bm \subseteq \Vm$ and $ \Cm \subseteq \Vm$.
Then we can form a linear time-invariant control system having a state update matrix equal to $A$, an input matrix $B\in\mathbb{R}^{n\times m}$ composed of $m$ elementary column vectors corresponding to $ \Bm=\im B $, and an output matrix $C\in\mathbb{R}^{p\times n}$ having $p$ elementary rows corresponding to $ \Cm=\im C^\top=(\ker C)^\perp$:
\beq
\begin{split}
\dot x & =  Ax + B u, \\ 
y & = C x ,
\end{split}
\label{eq:lin-syst1}
\eeq
where we have associated one state variable $ x_i \in \mathbb{R} $ of the vector $ x = [ x_1 \ldots x_n]^\top $ to each node $ v_i $ of the graph.
\subsection{Invariance, controlled invariance and conditioned invariance for node sets}

In this section we formulate the concepts of invariance for sets of nodes, rather than for subspaces, as it is standard in the literature (see, e.g., \cite{basile1992controlled}).

\begin{definition}
\label{def:A-invar1}
A subset $ \Zm \subset \Vm$ is an invariant (or $A$-invariant) set of nodes for~\eqref{eq:lin-syst1} if $  A \Zm \subseteq \Zm $.
\end{definition}

Definition~\ref{def:A-invar1}, a standard concept also for subspaces, corresponds to $Ax\in\Zm$, $\forall\,x\in\Zm$. For network systems like~\eqref{eq:lin-syst1} we can have also an equivalent topological characterization, valid for the entire class $ \Am$. 
%The proof of the next Proposition follows as a special case of Proposition~\ref{prop:AB-invar-equiv-char} below. 
%\bluenote{LL: lascerei questo commento.}

\begin{proposition}[\textbf{\cite[Prop.~1]{lebon2026geometric}}]
\label{prop:invA}
Consider the system~\eqref{eq:lin-syst1}. A subset $ \Zm \subset \Vm $ is an $A$-invariant set of nodes iff any of the following equivalent conditions is met:
  \benu[label=(\alph*)]
  \item $ \Zm$ is a terminal subset of nodes;
\item If $ Z $ is the basis matrix of $ \Zm $ (i.e., $ \im Z = \Zm $ and $Z$ is full column rank), $ \exists$ a matrix $X$ s.t. $AZ = ZX$. 
%\rednote{CA: questa va bene qua? E' corretta?}\,\bluenote{LL: Si. Ho aggiunto che i generatori siano lin.ind.}
  \eenu
\end{proposition}

The concept of controlled invariance is used to expand the notion of invariance (see, e.g., Chapter 4 of \cite{trentelman2012control}).

\begin{definition}
\label{def:contr-invar}
A subset $ \Zm \subset \Vm$ is a controlled invariant (or $(A,B)$-invariant) set of nodes for~\eqref{eq:lin-syst1} if $ A \Zm \subseteq \Zm + \Bm $.
\end{definition}

The following proposition provides three equivalent characterizations of $ (A, \, B)$-invariance for sets of nodes.
\begin{proposition}[\textbf{\cite[Prop.~2]{lebon2026geometric}}]
\label{prop:AB-invar-equiv-char}
$ \Zm \subset \Vm$ is an $ (A,B)$-invariant set of nodes for~\eqref{eq:lin-syst1} iff any of the following equivalent conditions is met:
\benu[label=(\alph*)]
\item $\forall\,(v_i , \, v_j) \in \Em $, $ v_i \in \Zm \; \Longrightarrow \; v_j \in \Zm \cup \Bm$;\label{item:a_prop:AB-invar-equiv-char}
\item If $ Z $ is the basis matrix of $ \Zm $, $ \exists $ matrices $X$, $U$ s.t. $AZ = Z X +BU$;\label{item:b_prop:AB-invar-equiv-char}
\item $\exists$ a matrix $F\in\mathbb{R}^{m\times n}$, called a ``friend'' of $\Zm$, s.t. $\Zm$ is $(A-BF)$-invariant, i.e., $(A-BF)\Zm\subseteq\Zm$.\label{item:c_prop:AB-invar-equiv-char}

%$ \exists $ matrices $X\in\mathbb{R}^{r\times r}$, $U\in\mathbb{R}^{m\times r}$ such that $AZ = Z X +BU$.
\eenu
\end{proposition}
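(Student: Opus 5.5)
The plan is to prove the cycle of implications Definition~\ref{def:contr-invar} $\Rightarrow$ (a) $\Rightarrow$ (b) $\Rightarrow$ (c) $\Rightarrow$ Definition~\ref{def:contr-invar}. Two facts set up in Section~II do most of the work. First, every subspace involved ($\Zm$, $\Bm$, $\Zm+\Bm$) is spanned by elementary vectors, and $\Zm+\Bm \sim \Zm\cup\Bm$. Second, since $A\in\Am$, column $i$ of $A$ satisfies $Ae_i=\sum_{j:(v_i,v_j)\in\Em} A_{ji}e_j$ with every coefficient $A_{ji}\neq 0$.

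\emph{Definition $\Leftrightarrow$ (a).} The condition $A\Zm\subseteq\Zm+\Bm$ holds iff $Ae_i\in \textnormal{span}\{e_j : v_j\in\Zm\cup\Bm\}$ for every $v_i\in\Zm$. A vector lies in a coordinate subspace iff its support lies in the corresponding index set. The support of $Ae_i$ is exactly the set of heads of edges leaving $v_i$, because all those coefficients are nonzero. So the condition says precisely that every edge $(v_i,v_j)$ with $v_i\in\Zm$ has $v_j\in\Zm\cup\Bm$. This single argument gives both directions, and it is where membership of $A$ in the class $\Am$ is used. The converse direction would in fact hold for any matrix whose support is contained in $\Em$; only the forward direction needs every coefficient to be nonzero.

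\emph{(a) $\Rightarrow$ (b) and (b) $\Rightarrow$ Definition.} Let $Z$ be the basis matrix of $\Zm$. By the step above, each column $Ae_i$ of $AZ$ splits as a combination of columns of $Z$ plus a combination of columns of $B$. If some $v_j\in\Zm\cap\Bm$, assign that term to $Z$, say; any assignment works. Collecting the coefficients column by column gives matrices $X,U$ with $AZ=ZX+BU$. For the reverse direction, $AZ=ZX+BU$ immediately gives $A\,\im Z\subseteq \im Z+\im B$, which is the definition.

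\emph{(b) $\Rightarrow$ (c) and (c) $\Rightarrow$ Definition.} Given $AZ=ZX+BU$, look for $F$ with $FZ=-U$; then $(A-BF)Z=ZX$, and Proposition~\ref{prop:invA}(b) shows that $\Zm$ is $(A-BF)$-invariant. Because $Z$ has full column rank it has a left inverse, and for a matrix of elementary columns one can take $Z^\top$, since $Z^\top Z=I$. So $F=-UZ^\top$ works; any $F$ agreeing with $-U$ on $\Zm$ would do. Conversely, if $(A-BF)\Zm\subseteq\Zm$, then for $x\in\Zm$ we have $Ax=(A-BF)x+BFx\in\Zm+\Bm$.

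The main obstacle is the equivalence between the definition and (a), specifically the forward direction. It requires a careful support argument: the nonzero pattern of $Ae_i$ coincides with the out-neighbours of $v_i$, so no cancellation can occur. This is exactly where membership in $\Am$ is used. Every other step is routine linear algebra.
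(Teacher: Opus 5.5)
The paper does not prove this proposition; it imports it from \cite{lebon2026geometric}, so there is no in-paper argument to compare against. Judged on its own, your route is the natural one and most of it is sound. The support argument for Definition~\ref{def:contr-invar} $\Leftrightarrow$ (a) is correct, and so is your remark that only the forward direction needs $A_{ji}\neq 0$. The steps (a) $\Rightarrow$ (b) $\Rightarrow$ Definition and (c) $\Rightarrow$ Definition are also correct.

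However, there is a sign error in (b) $\Rightarrow$ (c), and it makes the friend you write down wrong. Since $(A-BF)Z = AZ - B(FZ) = ZX + BU - B(FZ)$, you need $FZ = U$, not $FZ=-U$. With your choice $F=-UZ^\top$ you get $(A-BF)Z = ZX + 2BU$, which in general is not in $\im Z$. For instance, take the single edge $v_1\to v_2$ with $A_{21}=a\neq 0$, $\Zm=\{v_1\}$, $\Bm=\{v_2\}$. Then $X=0$, $U=a$, and $F=-UZ^\top=-a e_1^\top$ gives $(A-BF)e_1 = 2a e_2\notin\Zm$. The correct friend is $F=UZ^\top$, or any $F$ agreeing with $U$ on $\Zm$. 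Then $BFZ = BUZ^\top Z = BU$ and $(A-BF)Z = ZX$. Your sign would be right under the $A+BF$ convention used in some textbooks, but not for the $A-BF$ convention in the statement.

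A smaller point concerns how you conclude from $(A-BF)Z=ZX$. You invoke Proposition~\ref{prop:invA}(b), but that proposition is stated for adjacency matrices in $\Am$, and $A-BF$ generally is not one. In the example above, $A-BF=0$ for the correct $F$. The implication you need, $MZ=ZX \Rightarrow M\,\im Z\subseteq \im Z$, holds for any matrix $M$, so state it directly rather than citing Proposition~\ref{prop:invA}.
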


\begin{definition}
A subset $ \Sm \subset \Vm$ is a conditioned invariant (or $ (C,A)$-invariant) set of nodes for~\eqref{eq:lin-syst1} if $ A ( \Sm \setminus\Cm)\subseteq \Sm$ (with $\Cm=(\ker C)^\perp$ and $C$ with elementary rows).
\end{definition}
Similarly to above, equivalent characterizations of $ (C,A)$-invariance can be formulated.
\begin{proposition}[\textbf{\cite[Prop.~3]{lebon2026geometric}}]\label{prop:CA-invar-equiv-char}
$\Sm $ is a $ (C,A)$-invariant set of nodes if and only if any of the following equivalent conditions is met:
\benu[label=(\alph*)]
\item $\forall\,(v_i , \, v_j) \in \Em $, $ v_i \in \Sm$, $ v_i \notin \Cm$ $ \Longrightarrow \; v_j \in \Sm$;\label{item:a_prop:CA-invar-equiv-char}
\item $\exists$ a matrix $H\in\mathbb{R}^{n\times p}$, called a ``friend'' of $\Sm$, s.t. $\Sm$ is $(A-HC)$-invariant, i.e., $(A-HC)\Sm\subseteq\Sm$.\label{item:b_prop:CA-invar-equiv-char}
\eenu
\end{proposition}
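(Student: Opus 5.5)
We prove the cycle (a) $\Leftrightarrow$ definition $\Leftrightarrow$ (b), working directly with the elementary-vector structure of node subspaces and of $C$.

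\emph{Definition $\Leftrightarrow$ (a).} Since $\Sm\setminus\Cm$ is a set of nodes, its associated subspace is spanned by $\{e_i : v_i\in\Sm,\ v_i\notin\Cm\}$. Hence $A(\Sm\setminus\Cm)\subseteq\Sm$ holds iff $Ae_i\in\Sm$ for every such $i$. The vector $Ae_i$ is the $i$-th column of $A$. Its nonzero entries are exactly the $A_{ji}$ with $(v_i,v_j)\in\Em$, and they are nonzero for every $A\in\Am$ by the support definition. Therefore $Ae_i\in\Sm$ iff every out-neighbour $v_j$ of $v_i$ lies in $\Sm$, which is condition (a), and this holds uniformly over the class $\Am$.

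\emph{(b) $\Rightarrow$ definition.} Take $x\in\Sm\setminus\Cm$, viewed as the subspace spanned by the $e_i$ with $v_i\in\Sm\setminus\Cm$. Since $\Cm=(\ker C)^\perp$ and the rows of $C$ are the $e_k^\top$ with $v_k\in\Cm$, such an $x$ lies in $\ker C$. Then $Ax=(A-HC)x\in\Sm$, because $x\in\Sm$ and $\Sm$ is $(A-HC)$-invariant.

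\emph{(a) $\Rightarrow$ (b).} This is the only constructive step: we build an explicit friend $H$. The plan is to choose $H$ so that $HC$ cancels exactly the columns of $A$ that leave $\Sm$ from measured nodes, i.e.,
\[
(HC)_{ji}=A_{ji}\quad\text{for } v_i\in\Sm\cap\Cm,\ v_j\notin\Sm,
\]
and $(HC)_{ji}=0$ otherwise. This is possible because $HC$ has column $i$ equal to the column of $H$ indexed by the output row $e_i^\top$ when $v_i\in\Cm$, and equal to zero when $v_i\notin\Cm$. So for each output $k$, corresponding to a node $v_i\in\Cm$, we set
\[
H_{jk}=A_{ji}\ \text{ if } v_i\in\Sm,\ v_j\notin\Sm,\qquad H_{jk}=0 \text{ otherwise}.
\]
We then check $(A-HC)e_i\in\Sm$ for each $v_i\in\Sm$, splitting into two cases:
\begin{itemize}
\item If $v_i\in\Sm\cap\Cm$, the entries of column $i$ outside $\Sm$ are zeroed by construction.
\item If $v_i\in\Sm\setminus\Cm$, column $i$ of $HC$ is zero, and $Ae_i\in\Sm$ by (a).
\end{itemize}

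\emph{Main obstacle.} This is mild: the bookkeeping between output indices $k$ and node indices $i$. We handle it by writing $C=[e_{i_1}\ \cdots\ e_{i_p}]^\top$ explicitly. Alternatively, the (a) $\Rightarrow$ (b) direction can be obtained by duality from Proposition~\ref{prop:AB-invar-equiv-char} applied to $\Sm^\perp$ with $(A^\top,C^\top)$, taking $H=F^\top$. This route also uses the identity $\partial_\pm(\Wm,A)=\partial_\mp(\Wm^\perp,A^\top)$.
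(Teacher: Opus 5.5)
Your proof is correct and complete. The column-support argument gives definition $\Leftrightarrow$ (a) for every $A\in\Am$. The step (b) $\Rightarrow$ definition is the standard observation that the subspace of $\Sm\setminus\Cm$ lies in $\ker C$. Your friend $H$ does zero every entry of column $i$ of $A$ outside $\Sm$ for $v_i\in\Sm\cap\Cm$, while leaving the columns for $v_i\in\Sm\setminus\Cm$ untouched, where (a) applies.

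Note that the paper does not prove this proposition; it quotes it from \cite{lebon2026geometric}. There is therefore no in-paper argument to compare against, and your write-up supplies a self-contained one.

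Two small remarks. First, your friend has the compact form $H=\Pi_{\Sm^\perp}A\,\Pi_{\Sm}C^\top$, where $\Pi_{\Sm}$ and $\Pi_{\Sm^\perp}$ are the diagonal $0/1$ projectors onto the coordinates of $\Sm$ and $\Sm^\perp$. This is the output-injection counterpart of the closed-form friend $G=B^\top AC^\top$ of Theorem~\ref{thm:G}. The factor $\Pi_{\Sm}$ can even be dropped, since columns of $HC$ indexed by nodes outside $\Sm$ never act on $\Sm$.

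Second, the duality route does not actually use the boundary identity $\partial_\pm(\Wm,A)=\partial_\mp(\Wm^\perp,A^\top)$. The fact it needs is that your condition (a) is, by contraposition, exactly condition~\ref{item:a_prop:AB-invar-equiv-char} of Proposition~\ref{prop:AB-invar-equiv-char} for the node set $\Sm^\perp$ in the reversed graph (adjacency $A^\top$) with input set $\Cm$ (input matrix $C^\top$). Item~\ref{item:c_prop:AB-invar-equiv-char} of that proposition, together with $M^\top\Sm^\perp\subseteq\Sm^\perp\iff M\Sm\subseteq\Sm$, then gives $H=F^\top$.

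Your direct construction gives an explicit friend and does not depend on Proposition~\ref{prop:AB-invar-equiv-char}, which is also only cited here. The duality route is shorter but inherits that dependence.
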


%\subsubsection[Closed-loop system and $(C,A,B)$-pairs]{Closed-loop system and $\bm{(C,A,B)}$-pairs}

\begin{proposition}\label{prop:cont_cond}
A set of nodes $\Wm$ is $(A,B)$-invariant and $(C,A)$-invariant iff it is $A$-invariant or each edge exiting $\Wm$ is of the type $(v_i,v_j)$ with $v_i\in\Cm$ and $v_j\in\Bm$.
\end{proposition}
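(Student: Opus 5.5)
The plan is to prove both directions directly from the edge-wise characterizations in Proposition~\ref{prop:AB-invar-equiv-char}\ref{item:a_prop:AB-invar-equiv-char} and Proposition~\ref{prop:CA-invar-equiv-char}\ref{item:a_prop:CA-invar-equiv-char}, together with the terminal-set characterization of $A$-invariance in Proposition~\ref{prop:invA}. Everything reduces to a case analysis on the edges $(v_i,v_j)\in\Em$ with $v_i\in\Wm$. Edges with $v_j\in\Wm$ satisfy both invariance conditions trivially. So the only edges that matter are the exiting ones, $v_i\in\Wm$ and $v_j\in\Wm^\perp$.

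\emph{Necessity.} Assume $\Wm$ is both $(A,B)$-invariant and $(C,A)$-invariant, and that it is not $A$-invariant; otherwise there is nothing to show. Take any exiting edge $(v_i,v_j)$.
\begin{itemize}
\item By Proposition~\ref{prop:AB-invar-equiv-char}\ref{item:a_prop:AB-invar-equiv-char}, $v_i\in\Wm$ gives $v_j\in\Wm\cup\Bm$. Since $v_j\notin\Wm$, we get $v_j\in\Bm$.
\item By Proposition~\ref{prop:CA-invar-equiv-char}\ref{item:a_prop:CA-invar-equiv-char}, if $v_i\notin\Cm$ then $v_j\in\Wm$, a contradiction. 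Hence $v_i\in\Cm$.
\end{itemize}
So every exiting edge is of the stated type.

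\emph{Sufficiency.} There are two cases.
\begin{itemize}
\item If $\Wm$ is $A$-invariant, it is terminal by Proposition~\ref{prop:invA}. Every edge from $\Wm$ then lands in $\Wm\subseteq\Wm\cup\Bm$, and the $(C,A)$ condition holds regardless of membership in $\Cm$. Hence $\Wm$ is both $(A,B)$- and $(C,A)$-invariant.
\item Otherwise, every exiting edge $(v_i,v_j)$ has $v_j\in\Bm$, which satisfies the $(A,B)$ condition. It also has $v_i\in\Cm$, so the premise $v_i\notin\Cm$ of the $(C,A)$ condition is false and that condition holds vacuously. Non-exiting edges are fine as noted above.
\end{itemize}

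I do not expect a real obstacle. The only point needing care is that the disjunction ``$A$-invariant or \dots'' is really one condition: an $A$-invariant set has no exiting edges, so it satisfies the second clause vacuously. The proof should remark on this so that both directions read as a single edge-wise argument.
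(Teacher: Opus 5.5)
Your proof is correct and follows essentially the same route as the paper: both directions reduce to the edge-wise characterizations (a) of Propositions~\ref{prop:AB-invar-equiv-char} and~\ref{prop:CA-invar-equiv-char}, with the $A$-invariant (terminal) case treated separately. The only difference is cosmetic: you argue necessity directly edge by edge, while the paper argues by contradiction. Your remark that the disjunction collapses to a single edge-wise condition, since a terminal set has no exiting edges, is a useful clarification.
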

\begin{proof}
One direction is proven directly from the definitions of $(A,B)$-invariant and $(C,A)$-invariant node set. In fact, a terminal set of nodes $\Wm$ is $A$-invariant, therefore also $(A,B)$-invariant and $(C,A)$-invariant. Assume that a set of nodes $\Wm$ is not $A$-invariant and define the set of all the edges exiting $\Wm$ as $\overline{\Cm}\times\overline{\Bm}\subseteq\Em$, $\overline{\Cm}\subseteq\Wm$, $\overline{\Bm}\subseteq\Wm^\perp$. If $v_i\in\Cm$, $\forall\,v_i\in\overline{\Cm}$ then $\Wm$ is a $(C,A)$-invariant node set, whereas if $v_j\in\Bm$, $\forall\,v_j\in\overline{\Bm}$ then $\Wm$ is an $(A,B)$-invariant node set.

For the other direction, assume by contradiction $\exists$ an edge exiting $\Wm$ not of the type $(v_i,v_j)$ with $v_i\in\Cm$ and $v_j\in\Bm$. Then either condition~$\ref{item:a_prop:AB-invar-equiv-char}$ of Proposition~\ref{prop:AB-invar-equiv-char}, or condition~$\ref{item:a_prop:CA-invar-equiv-char}$ of Proposition~\ref{prop:CA-invar-equiv-char}, or both, are violated.
\end{proof}

\subsection{Standard disturbance decoupling problem via output feedback over networks}
\label{sec:standard-DDP}

For a given network system \eqref{eq:lin-syst1}, we can consider two additional sets of nodes $ \Dm, \Tm \subset \mathcal{V}$, of cardinality $ | \Dm|$ and $ | \Tm | $, corresponding, resp., to nodes affected by disturbances and target nodes. Each disturbance $ w_i $, $ i\in\{1, \ldots, |\Dm| \}$, is assumed to act independently on a node of $ \Dm$.
Both $ \Dm $ and $ \Tm $ can be identified with elementary matrices $ D \in \mathbb{R}^{n\times | \Dm|} $ and $ T\in \mathbb{R}^{| \Tm|\times n} $, i.e., matrices composed of elementary columns, resp. rows, representing subselections of nodes in $ \Vm$. 
We can then associate to $ \Dm $ and $ \Tm $ vector subspaces, resp. $ \im D =\Dm $ and $ \ker T = \Vm\setminus\Tm$. Denoting $ w = [ w_1 \ldots w_{|\Dm|}]^\top $, we obtain the following generalization of \eqref{eq:lin-syst1}:
\beq
\begin{split}
\dot x & =  Ax + B u + D w, \\
y & = Cx,\\
z & = Tx .
\end{split}
\label{eq:lin-syst2}
\eeq
For the system~\eqref{eq:lin-syst2}, we consider the following problem.
\begin{problem}[\textbf{DDPOF}]\label{problem:DDPOF}
Given the system~\eqref{eq:lin-syst2}, find a control law $u= - Gy $ that renders the target nodes in $\mathcal{T}$ unaffected by the disturbances in $\mathcal{D}$.
\end{problem}

\section{Minimal Input/Output Cardinality DDPOF}

\begin{comment}  
Hereafter we always assume the following initialization.
\begin{assumption}\label{ass:1}
The algorithms for $\Zm^\circ$ and $\Sm^\circ$ in Propositions~\ref{prop:Z} and~\ref{prop:S} are initialized with $\Zm_0=\ker T=\Vm\setminus\Tm$ and $\Sm_0=\im D=\Dm$, respectively. 
\end{assumption}
Under Assumption~\ref{ass:1}, $\Zm^\ast$ resp. $\Zm^\circ$ is the maximal controlled invariant subspace resp. node set contained in $\ker T$ resp. $\Vm\setminus\Tm$. $\Zm^\circ$ is the largest node subset of $\Tm^\perp$ which can be rendered invariant by state feedback. In other words, $\Zm^\circ $ is rendered $ (A-BF)$-invariant for a suitable choice of a friend $F\in\mathbb{R}^{m\times n}$. We call $\Zm^\ast$ resp. $\Zm^\circ$ the maximal target-nulling controlled invariant subspace resp. set of nodes. Similarly, $\Sm^\ast$ resp. $\Sm^\circ$ is the minimal conditioned invariant subspace resp. node set containing $\im D$ resp. $\Dm$. $\Sm^\circ$ is the smallest node superset of $\Dm$ which can be rendered invariant by output injection. In other words, $\Sm^\circ $ is rendered $ (A-HC)$-invariant for a suitable choice of a friend $H\in\mathbb{R}^{n\times p}$. We call $\Sm^\ast$ resp. $\Sm^\circ$ the minimal disturbance-containing conditioned invariant subspace resp. set of nodes.
\end{comment}

In order to avoid degenerate situations in the problem, we make the following assumption on the sets $\Dm$ and $\Tm$.
\begin{assumption}\label{ass:2}
$\Dm\cap\Tm = \emptyset $ (equiv. $TD=0$).
\end{assumption}

The solvability of Problem~\ref{problem:DDPOF} is characterized by the following standard result.
\begin{theorem}[\textbf{\cite[Thm.~3.2]{hamano1975localization}}]\label{thm:ddp}
For the system~\eqref{eq:lin-syst2} the DDPOF is solvable iff $\im D\subseteq\Wm^\ast\subseteq\ker T$, where $\Wm^\ast$ is an $(A,B)$- and $(C,A)$-invariant subspace.
\end{theorem}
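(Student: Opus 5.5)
The plan is the classical two-direction argument of Hamano and Furuta. The whole proof revolves around one question: whether a single output feedback gain $G$ can make a subspace $\Wm^\ast$ invariant for $A-BGC$. The conditions are the same for arbitrary $(A,B,C,D,T)$, and nothing network-specific is needed.

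\emph{Sufficiency.} Let $\Wm^\ast$ be $(A,B)$-invariant and $(C,A)$-invariant with $\im D\subseteq\Wm^\ast\subseteq\ker T$. The first step is to show that there exists $G$ with $(A-BGC)\Wm^\ast\subseteq\Wm^\ast$. This is the key lemma, and the main obstacle, because the two friends $F$ (from Proposition~\ref{prop:AB-invar-equiv-char}\ref{item:c_prop:AB-invar-equiv-char}) and $H$ (from Proposition~\ref{prop:CA-invar-equiv-char}\ref{item:b_prop:CA-invar-equiv-char}) are a priori unrelated. I would construct $G$ directly.
\begin{itemize}
\item Pick a basis of $\Wm^\ast$ adapted to $\Wm^\ast\cap\ker C$: vectors $w_1,\dots,w_k$ spanning $\Wm^\ast\cap\ker C$, completed by $w_{k+1},\dots,w_r$.
\item By $(C,A)$-invariance, $A(\Wm^\ast\cap\ker C)\subseteq\Wm^\ast$. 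Hence $(A-BGC)w_i=Aw_i\in\Wm^\ast$ for $i\le k$, whatever $G$ is.
\item For $i>k$, $(A,B)$-invariance gives $Aw_i=\tilde w_i+Bu_i$ with $\tilde w_i\in\Wm^\ast$. The vectors $Cw_{k+1},\dots,Cw_r$ are linearly independent, since $\Wm^\ast\cap\ker C$ has been factored out.
\item So a linear map $G$ with $GCw_i=u_i$ for $i>k$ exists (extend arbitrarily off $\operatorname{span}\{Cw_i\}$). Then $(A-BGC)w_i=\tilde w_i\in\Wm^\ast$.
\end{itemize}
The independence check is the crux. In the node setting, Proposition~\ref{prop:cont_cond} makes it transparent: every exiting edge goes from a node in $\Cm$ to a node in $\Bm$, so one can even take $G$ supported only on those entries, cancelling each such edge weight.

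With $G$ in hand, $\Wm^\ast$ is invariant under $A_G:=A-BGC$ and contains $\im D$. Hence the reachable subspace $\langle A_G\mid\im D\rangle=\im D+A_G\im D+\cdots+A_G^{n-1}\im D$ is contained in $\Wm^\ast\subseteq\ker T$. The transfer function from $w$ to $z$ is $T(sI-A_G)^{-1}D=\sum_k TA_G^kD\,s^{-k-1}=0$, so the disturbance is decoupled.

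\emph{Necessity.} Suppose some $G$ decouples. Set $\Wm^\ast:=\langle A-BGC\mid\im D\rangle$. It contains $\im D$, and since $TA_G^kD=0$ for all $k$, it lies in $\ker T$. It is $A_G$-invariant, so $A\Wm^\ast\subseteq\Wm^\ast+\im B$ because $BGC\Wm^\ast\subseteq\im B$; this is $(A,B)$-invariance by Definition~\ref{def:contr-invar}. Moreover $A(\Wm^\ast\cap\ker C)=A_G(\Wm^\ast\cap\ker C)\subseteq\Wm^\ast$, which is $(C,A)$-invariance in its subspace form.

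The only delicate point is the simultaneous-friend construction in the sufficiency direction. Everything else is the standard Cayley–Hamilton and reachable-subspace argument.
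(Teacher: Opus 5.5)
Your proof is correct, and it is the standard Hamano--Furuta argument; the paper gives no proof of its own for this statement and simply imports it by citation. The adapted-basis construction of $G$ (independence of $Cw_{k+1},\dots,Cw_r$ follows from having split off $\Wm^\ast\cap\ker C$) is the classical lemma that a subspace is $(A-BGC)$-invariant for some $G$ iff it is both $(A,B)$- and $(C,A)$-invariant, and the reachable subspace $\langle A-BGC\mid \im D\rangle$ correctly supplies the necessity direction.
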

In \cite{lebon2026geometric}, a sufficient condition for the solvability of Problem~\ref{problem:DDPOF} is stated in terms of the corresponding node set $\Wm$.

\begin{theorem}[\textbf{\cite[Thm.~3]{lebon2026geometric}}]\label{thm:DDPN}
Under Assumption~\ref{ass:2}, the DDPOF is solvable for the system~\eqref{eq:lin-syst2} if any of the following equivalent conditions is met:
\benu[label=(\alph*)]
\item $\exists\,\Wm$ $(A,B)$-invariant and $(C,A)$-invariant set of nodes s.t. $\Dm\subseteq\Wm\subseteq\Vm\setminus\Tm$;\label{item:a_thm:ddpof}
\item  Let $\Pm_1,\dots,\Pm_\ell$ be all possible $\Dm$-to-$\Tm$ paths. For every path $\Pm_j=\{v_1,\dots,v_{\kappa_j}\}$ of length $\kappa_j-1\geq1$, with $v_1\in\Dm$ and $v_{\kappa_j}\in\Tm$, $\exists$ at least one sub-path of length 1 of the form $\{v_{p},v_{p+1}\}$ with $v_{p}\in\Cm$ and $v_{p+1}\in\Bm$, $p\in\{1,\dots,\kappa_j-1\}$, $j\in\{1,\dots,\ell\}$.\label{item:c_thm:ddpof}
\eenu
\end{theorem}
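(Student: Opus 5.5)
The plan is to prove two things: that condition \ref{item:a_thm:ddpof} implies solvability, and that \ref{item:a_thm:ddpof} and \ref{item:c_thm:ddpof} are equivalent.

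\textbf{Sufficiency of \ref{item:a_thm:ddpof}.} Let $\Wm$ be a node set that is both $(A,B)$-invariant and $(C,A)$-invariant, with $\Dm\subseteq\Wm\subseteq\Vm\setminus\Tm$. Identify $\Wm$ with its associated subspace from Definition~\ref{def:sub}. Then $\im D=\Dm\subseteq\Wm$ and $\Wm\subseteq\Vm\setminus\Tm=\ker T$. Controlled invariance of the node set, $A\Wm\subseteq\Wm+\Bm$, is exactly controlled invariance of the subspace, and the same holds for conditioned invariance. Theorem~\ref{thm:ddp} with $\Wm^\ast=\Wm$ then gives solvability. The sets are assumed nonempty and $\Dm\cap\Tm=\emptyset$ by Assumption~\ref{ass:2}; otherwise no such $\Wm$ exists.

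\textbf{\ref{item:a_thm:ddpof} $\Rightarrow$ \ref{item:c_thm:ddpof}.} By Proposition~\ref{prop:cont_cond}, either $\Wm$ is $A$-invariant, or every edge leaving $\Wm$ has the form $(v_i,v_j)$ with $v_i\in\Cm$ and $v_j\in\Bm$. Take any $\Dm$-to-$\Tm$ path $\Pm_j$. It starts at $v_1\in\Dm\subseteq\Wm$ and ends at $v_{\kappa_j}\in\Tm\subseteq\Wm^\perp$. Let $p$ be the first index with $v_{p+1}\notin\Wm$; then $(v_p,v_{p+1})$ exits $\Wm$. This rules out the $A$-invariant (terminal) case, because a terminal set has no exiting edge. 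So $v_p\in\Cm$ and $v_{p+1}\in\Bm$, which is the required sub-path.

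\textbf{\ref{item:c_thm:ddpof} $\Rightarrow$ \ref{item:a_thm:ddpof}.} This is the main step, since we must construct $\Wm$. Delete from $\Em$ every edge $(v_i,v_j)$ with $v_i\in\Cm$ and $v_j\in\Bm$, and let $\Wm$ be the set of nodes reachable from $\Dm$ in the remaining graph, including $\Dm$ itself. Then $\Dm\subseteq\Wm$.

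\emph{Disjointness from $\Tm$.} Suppose some $t\in\Tm$ were in $\Wm$. Then there is a path from $\Dm$ to $t$ using only undeleted edges. This walk can be shortened to a simple path, and the path has length at least $1$ because $\Dm\cap\Tm=\emptyset$. Such a $\Dm$-to-$\Tm$ path contains no $\Cm\to\Bm$ edge, contradicting \ref{item:c_thm:ddpof}.

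\emph{Invariance.} By construction of a reachable set, every edge leaving $\Wm$ must be a deleted edge, i.e.\ of type $\Cm\to\Bm$; if no edge leaves $\Wm$, it is terminal. In either case Proposition~\ref{prop:cont_cond} shows that $\Wm$ is both $(A,B)$-invariant and $(C,A)$-invariant.

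The delicate points are the reduction from walks to simple paths and the use of Assumption~\ref{ass:2} to guarantee length at least $1$. Both are short.
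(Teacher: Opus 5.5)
Your proof is correct. This paper does not prove the statement itself; it imports it from \cite{lebon2026geometric}. The only hint of the original argument is the Remark that follows: there $\Wm$ is obtained either by enlarging the minimal $(C,A)$-invariant node set containing $\Dm$ (determined by $\Cm$ alone) or by shrinking the maximal $(A,B)$-invariant node set contained in $\Vm\setminus\Tm$ (determined by $\Bm$ alone). That is, it goes through the fixed-point constructions of geometric control.

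You bypass these objects and build $\Wm$ in one step, as the set reachable from $\Dm$ after deleting the $\Cm\to\Bm$ edges. The two routes are consistent. The minimal $(C,A)$-invariant set containing $\Dm$ is the set reachable from $\Dm$ after deleting \emph{all} out-edges of nodes in $\Cm$. Your $\Wm$ therefore contains it and is a concrete instance of the ``extension'' mentioned in the Remark. In fact, by Proposition~\ref{prop:cont_cond}, every set that is both $(A,B)$- and $(C,A)$-invariant is closed under all edges not of type $\Cm\to\Bm$. So your $\Wm$ is the smallest such set containing $\Dm$. Dually, the complement of the set of nodes that can reach $\Tm$ in the pruned graph is the largest such set inside $\Vm\setminus\Tm$.

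Your route gives a short, self-contained graph argument that exhibits the $\Cm\to\Bm$ edges as an edge cut separating $\Dm$ from $\Tm$. This is exactly the viewpoint the present paper exploits in its min-cut reformulation. The original route instead links $\Wm$ directly to the canonical sets produced by the standard controlled/conditioned-invariant algorithms, which are also the natural objects for the state- and dynamic-feedback versions of the problem.

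One small point you left implicit: the node-set notion of conditioned invariance matches the subspace notion because $\Wm\cap\ker C$ is the coordinate subspace of $\Wm\setminus\Cm$. That identification is what licenses your appeal to Theorem~\ref{thm:ddp}.
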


\begin{remark}
The node set $\Wm$ is determined by $\Bm$ and $\Cm$ since it can be constructed as an extension of the minimal $(C,A)$-invariant node set containing $\Dm$, which is uniquely determined by $\Cm$, or as a reduction of the maximal $(A,B)$-invariant node set contained in $\Tm^\perp$, which is uniquely determined by $\Bm$ (see the proof of \cite[Thm.~3]{lebon2026geometric}).
\end{remark}
As a consequence of Theorem~\ref{thm:DDPN} we can formulate the following minimal node-cardinality problem.

\begin{problem}[\textbf{Minimal Input/Output Cardinality DDPOF}]
\label{problem:DDPOF_N}
Given a network $\Gm$ and sets $ \Dm$, $ \Tm$, find a minimal cardinality set of input nodes $\Bm$ and output nodes $\Cm$ that solves the DDPOF for the system \eqref{eq:lin-syst2}.
\end{problem}

In practice, solving Problem~\ref{problem:DDPOF_N} corresponds to finding matrices $ B \in \mathbb{R}^{n\times m}$ and $C^\top\in \mathbb{R}^{n\times p}$ of elementary columns s.t. $ \im B =\Bm $ and $ \im C^\top =\Cm $.
\subsection{Solution to the minimal input/output cardinality DDPOF}

Exploiting Theorem~\ref{thm:DDPN}, Problem~\ref{problem:DDPOF_N} can be reformulated as an optimization problem:

\begin{equation*}
\textbf{Problem~\ref{problem:DDPOF_N}:} \qquad \Biggl\{
    \begin{aligned}
      & \min_{\Bm,\Cm} \quad |\Bm|+|\Cm|\\
      & \text{subject to:} \quad \Dm\subseteq\Wm\subseteq\Tm^\perp
    ,\end{aligned}
\end{equation*}
where $\Wm$ is an $(A,B)$- and $(C,A)$-invariant set of nodes.

In Section~IV of \cite{lebon2026geometric}, we characterize the feasible (i.e., satisfying $\Dm\subseteq\Wm\subseteq\Tm^\perp$) and optimal solutions of Problem~\ref{problem:DDPOF_N}. The result is reported in the following.

Denote by $ \mathcal{P} = \bigcup_i \mathcal{P}_i $ the union of all $ \Dm$-to-$ \Tm$ paths. 

\begin{theorem}[\textbf{\cite[Thm.~7]{lebon2026geometric}}]\label{thm:OF_otp_charact}
Under Assumption~\ref{ass:2},
\benu[label=(\alph*)]
\item If $ \Bm$, $\Cm$ are a feasible solution of Problem~\ref{problem:DDPOF_N} then \\$\Bm \supseteq \partial_+(\Wm,A)$ and $\Cm \supseteq \partial_-(\Wm,A)$;\label{item:a_OF_otp_charact}
\item If $ \Bm$, $\Cm$ are an optimal solution of Problem~\ref{problem:DDPOF_N} then \\$\Bm = \partial_+(\Wm,A)\cap \mathcal{P}=\partial_+(\Wm,A)$ and $\Cm = \partial_-(\Wm,A)\cap \mathcal{P}=\partial_-(\Wm,A)$;\label{item:b_OF_otp_charact}
\item $ \Bm$, $\Cm$ are an optimal solution of Problem~\ref{problem:DDPOF_N} iff they are the minimal cardinality sets of nodes for which $\Bm = \partial_+(\Wm,A)\cap \mathcal{P}=\partial_+(\Wm,A)$ and $\Cm = \partial_-(\Wm,A)\cap \mathcal{P}=\partial_-(\Wm,A)$.\label{item:c_OF_otp_charact}
\eenu
\end{theorem}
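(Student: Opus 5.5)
Part (a) follows from Proposition~\ref{prop:cont_cond} together with the definitions of the boundaries. Part (b) is then obtained by a pruning argument, and part (c) is a reformulation of (b).

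\textbf{Part (a).} Let $\Bm,\Cm$ be feasible. By Theorem~\ref{thm:DDPN} there is a node set $\Wm$ with $\Dm\subseteq\Wm\subseteq\Tm^\perp$ that is both $(A,B)$- and $(C,A)$-invariant.
\begin{itemize}
\item If $\Wm$ is terminal, then $\partial_\pm(\Wm,A)=\emptyset$ and the inclusions are trivial.
\item Otherwise, Proposition~\ref{prop:cont_cond} says every edge $(v_i,v_j)$ leaving $\Wm$ has $v_i\in\Cm$ and $v_j\in\Bm$.
\end{itemize}
By Definition~\ref{def:boundary}, $\partial_-(\Wm,A)$ is exactly the set of tails of such exiting edges, so $\partial_-(\Wm,A)\subseteq\Cm$. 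Likewise $\partial_+(\Wm,A)$ is exactly the set of heads, so $\partial_+(\Wm,A)\subseteq\Bm$.

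\textbf{Part (b), first equalities.} Let $\Bm,\Cm$ be optimal with associated $\Wm$. Consider the pair $\Bm'=\partial_+(\Wm,A)$, $\Cm'=\partial_-(\Wm,A)$.
\begin{itemize}
\item Every edge leaving $\Wm$ goes from $\Cm'$ to $\Bm'$, so by Proposition~\ref{prop:cont_cond} the same $\Wm$ certifies feasibility of $(\Bm',\Cm')$.
\item By (a), $\Bm'\subseteq\Bm$ and $\Cm'\subseteq\Cm$.
\end{itemize}
Optimality then forces $\Bm=\partial_+(\Wm,A)$ and $\Cm=\partial_-(\Wm,A)$.

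\textbf{Part (b), intersection with $\Pm$.} We must show no boundary node lies outside $\Pm$. The plan is to shrink $\Wm$ without creating new boundary.

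Define $\Wm'$ as the set of nodes of $\Wm$ that are reachable from $\Dm$ inside $\Wm$. Then:
\begin{itemize}
\item $\Dm\subseteq\Wm'\subseteq\Tm^\perp$.
\item Edges leaving $\Wm'$ either leave $\Wm$, or enter $\Wm\setminus\Wm'$. The second kind is impossible, since the head would then be reachable from $\Dm$ inside $\Wm$.
\end{itemize}
Hence every exiting edge of $\Wm'$ is an exiting edge of $\Wm$ whose tail is reachable from $\Dm$. Next, drop each exiting edge $(v_i,v_j)$ whose head $v_j$ cannot reach $\Tm$.

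\emph{Main obstacle.} Dropping such an edge cannot be done by modifying $\Wm$ alone. The fix is to enlarge $\Wm'$ by adding the set $R$ of all nodes reachable from such heads $v_j$. This is harmless:
\begin{itemize}
\item $R$ is closed under successors, so $R$ is terminal.
\item $R$ avoids $\Tm$, since none of its nodes can reach $\Tm$.
\end{itemize}
Therefore $\Wm''=\Wm'\cup R$ still satisfies $\Dm\subseteq\Wm''\subseteq\Tm^\perp$.

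The edges leaving $\Wm''$ are exactly the $\Wm$-exiting edges $(v_i,v_j)$ with $v_i$ reachable from $\Dm$ and $v_j$ reaching $\Tm$. For such an edge, a walk from some $d\in\Dm$ to $v_i$, then the edge, then a walk from $v_j$ to some $t\in\Tm$ can be shortened to a simple path. That path still uses the edge, because it crosses from $\Wm''$ to its complement only there. Hence both $v_i$ and $v_j$ lie in $\Pm$.

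By the first part of (b) applied to $\Wm''$, the pair $\partial_\pm(\Wm'',A)$ is feasible. Moreover, $\partial_+(\Wm'',A)\subseteq\partial_+(\Wm,A)\cap\Pm$ and $\partial_-(\Wm'',A)\subseteq\partial_-(\Wm,A)\cap\Pm$. Optimality then gives equality throughout, which yields $\Bm=\partial_+(\Wm,A)\cap\Pm=\partial_+(\Wm,A)$ and $\Cm=\partial_-(\Wm,A)\cap\Pm=\partial_-(\Wm,A)$.

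\textbf{Part (c).}
\begin{itemize}
\item The ``only if'' direction follows from (b), together with the fact that an optimal pair has minimum cardinality among all feasible pairs.
\item For the ``if'' direction, any pair of the form $(\partial_+(\Wm,A),\partial_-(\Wm,A))$ is feasible by Proposition~\ref{prop:cont_cond}. Conversely, by (b) every optimal solution has this form. Hence the minimum over pairs of this form equals the optimum of Problem~\ref{problem:DDPOF_N}.
\end{itemize}
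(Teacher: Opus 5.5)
The paper does not reprove this theorem; it imports it from the companion work. So your argument is judged on its own. Part (a), the first equalities in (b), and the reduction of (c) to (b) are all correct. The gap is in the step that places the boundary of $\Wm''$ inside $\Pm$. You claim the shortened walk "crosses from $\Wm''$ to its complement only there", and this is false. $R$ absorbs only nodes that cannot reach $\Tm$ \emph{at all}. So the walk from $v_j$ to $\Tm$ may re-enter $\Wm'$, and removing cycles can then delete the edge $(v_i,v_j)$. Being reachable from $\Dm$ and reaching $\Tm$ does not put a node on a simple $\Dm$-to-$\Tm$ path.

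Here is a concrete failure. Take nodes $d,a,x,t$ with edges $(d,a),(a,x),(x,a),(a,t)$, $\Dm=\{d\}$, $\Tm=\{t\}$ and $\Wm=\{d,a\}$. Then $\Wm'=\Wm$, and $R=\emptyset$ because $x$ reaches $t$ through $a$, so $\Wm''=\Wm$. The edge $(a,x)$ leaves $\Wm''$, yet $x\notin\Pm$, since the only simple $d$-to-$t$ path is $d,a,t$. This $\Wm$ is not optimal. However, your step never invokes optimality, so as written it asserts a false lemma. Appealing to the theorem's conclusion for optimal $\Wm$ instead would be circular.

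The repair is to enlarge $\Wm'$ by everything that cannot reach $\Tm$ \emph{while avoiding} $\Wm'$, not only by what cannot reach $\Tm$ at all.

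Let $Q$ be the set of nodes from which some node of $\Tm$ is reachable along a path contained in $\Vm\setminus\Wm'$, and set $\Wm^\ast=\Vm\setminus Q$. Note that $R\subseteq\Wm^\ast$, so this only enlarges your $\Wm''$.

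Since $\Wm'\subseteq\Tm^\perp$, every target node lies in $Q$. Hence $\Dm\subseteq\Wm'\subseteq\Wm^\ast\subseteq\Tm^\perp$.

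Now let $(u,w)$ exit $\Wm^\ast$.
\begin{itemize}
\item $u\in\Wm'$, since otherwise $u\in Q$ via $w$. So $(u,w)$ exits $\Wm'$, and hence exits $\Wm$.
\item $u$ is reached from $\Dm$ by a simple path whose nodes all lie in $\Wm'\subseteq\Wm^\ast$.
\item Every node on a shortest $\Wm'$-avoiding path from $w$ to $\Tm$ itself lies in $Q=\Vm\setminus\Wm^\ast$.
\end{itemize}
The two pieces are therefore node-disjoint. Their concatenation through $(u,w)$ is a simple $\Dm$-to-$\Tm$ path, so $u,w\in\Pm$. This gives $\partial_\pm(\Wm^\ast,A)\subseteq\partial_\pm(\Wm,A)\cap\Pm$.

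With $\Wm^\ast$ in place of $\Wm''$, your optimality chain closes exactly as before. Part (c), which rests on (b), is repaired along with it.
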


As for the synthesis of the output feedback, we have the following result.
\begin{theorem}[\textbf{\cite[Thm.~11]{lebon2026geometric}}]\label{thm:G}
Under Assumptions~\ref{ass:2}, and given a $(A,B)$- and $(C,A)$-invariant set of nodes $\Wm$, if $\Cm$ and $\Bm$ are two sets of output and input nodes, respectively, s.t. $\Cm=\partial_-(\Wm,A)$, and $\Bm=\partial_+(\Wm,A)$, then the unique friend of $\Wm$ solving the DDPOF, i.e., the matrix $G\in\mathbb{R}^{m\times p}$ s.t. $(A-BGC)\Wm\subseteq\Wm$, is given by $G=B^\top AC^\top$.
\end{theorem}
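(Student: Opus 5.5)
The plan is a direct entrywise computation of $BGC$. I will show that the condition $(A-BGC)\Wm\subseteq\Wm$ fixes every entry of $G$, and that the fixed values are exactly those of $B^\top A C^\top$. This gives existence and uniqueness at the same time.

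First I fix the structure of the products. Let $B=[e_{j_1}\ \cdots\ e_{j_m}]$ with $\Bm=\{v_{j_1},\dots,v_{j_m}\}$, and let $C$ have rows $e_{i_1}^\top,\dots,e_{i_p}^\top$ with $\Cm=\{v_{i_1},\dots,v_{i_p}\}$. For any $G\in\mathbb{R}^{m\times p}$, the matrix $BGC$ has entry $G_{kl}$ in position $(j_k,i_l)$ and zeros elsewhere. In particular $BGC$ has support only in rows indexed by $\Bm$ and columns indexed by $\Cm$. By hypothesis $\Bm=\partial_+(\Wm,A)\subseteq\Wm^\perp$ and $\Cm=\partial_-(\Wm,A)\subseteq\Wm$. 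Hence every nonzero entry of $BGC$ sits in a position $(j,i)$ with $v_i\in\Wm$ and $v_j\in\Wm^\perp$, i.e.\ it corresponds to a potential edge exiting $\Wm$.

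Next I write the invariance condition entrywise. By the topological characterization in Proposition~\ref{prop:invA}, applied to $A-BGC$ viewed as a weighted adjacency matrix, $(A-BGC)\Wm\subseteq\Wm$ holds iff $(A-BGC)_{ji}=0$ for every $v_i\in\Wm$ and $v_j\in\Wm^\perp$. I split these positions into two cases.

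\begin{itemize}
\item \emph{Case $v_i\notin\Cm$ or $v_j\notin\Bm$.} The term $BGC$ contributes nothing. Moreover $A_{ji}=0$: otherwise $(v_i,v_j)$ would be an exiting edge, forcing $v_i\in\partial_-(\Wm,A)=\Cm$ and $v_j\in\partial_+(\Wm,A)=\Bm$ (cf.\ Proposition~\ref{prop:cont_cond}). So the condition holds automatically, for every $G$.
\item \emph{Case $v_i=v_{i_l}\in\Cm$ and $v_j=v_{j_k}\in\Bm$.} The condition reads $A_{j_k i_l}-G_{kl}=0$.
\end{itemize}

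Since $(k,l)$ ranges over all index pairs of $G$ in the second case, every entry of $G$ is forced. The only solution is $G_{kl}=A_{j_k i_l}=e_{j_k}^\top A e_{i_l}=(B^\top A C^\top)_{kl}$. This proves both existence and uniqueness of $G=B^\top A C^\top$.

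The only subtle point I expect is the uniqueness claim. It relies on \emph{all} of $\Cm$ lying in $\Wm$ and \emph{all} of $\Bm$ lying in $\Wm^\perp$, which is why the hypotheses require equalities with the boundaries rather than mere inclusions. If extra nodes were present, the corresponding entries of $G$ could fall in unconstrained positions and would be free. Finally, the friend solves the DDPOF: $\Wm$ is then $(A-BGC)$-invariant with $\Dm\subseteq\Wm\subseteq\Tm^\perp$ (as in Theorem~\ref{thm:DDPN}), so the disturbances cannot reach $\Tm$.
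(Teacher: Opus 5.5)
Your proof is correct and complete. The paper does not prove this statement itself; it is imported from \cite[Thm.~11]{lebon2026geometric}, so there is no in-paper argument to compare with, and your proof has to be judged on its own.

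It establishes existence and uniqueness in one pass. The map $(k,l)\mapsto(j_k,i_l)$ is a bijection from the index set of $G$ onto the block with rows in $\Bm$ and columns in $\Cm$. Because $\Bm=\partial_+(\Wm,A)\subseteq\Wm^\perp$ and $\Cm=\partial_-(\Wm,A)\subseteq\Wm$, every position in that block is constrained by invariance, which forces each $G_{kl}$. Every other constrained position is automatically zero, since the boundaries capture all edges leaving $\Wm$. A compact restatement of the existence half: $BGC=(BB^\top)A(C^\top C)$ for $G=B^\top AC^\top$, where $BB^\top$ and $C^\top C$ are the diagonal $0/1$ projectors onto $\Bm$ and $\Cm$. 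Thus $A-BGC$ is $A$ with exactly the exiting-edge block deleted. That one-liner gives existence only; your entrywise bookkeeping is what delivers uniqueness.

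Some small points of precision.

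\emph{The appeal to Proposition~\ref{prop:invA}.} This is slightly off-label, because that proposition is stated for matrices with support exactly $\Em$, and $A-BGC$ generally has a different support. The criterion you need is simply that $(A-BGC)e_i$ is the $i$-th column, so it lies in $\Wm$ iff its entries in rows of $\Wm^\perp$ vanish.

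\emph{Your closing remark on the hypotheses.} Uniqueness (with the same formula) actually holds under the weaker conditions $\partial_+(\Wm,A)\subseteq\Bm\subseteq\Wm^\perp$ and $\partial_-(\Wm,A)\subseteq\Cm\subseteq\Wm$. The extra entries are then forced to $0=A_{ji}$. Uniqueness breaks only when some input node lies in $\Wm$ or some output node lies in $\Wm^\perp$, since the corresponding rows or columns are then unconstrained.

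\emph{Implicit and unused hypotheses.} The last step uses $\Dm\subseteq\Wm\subseteq\Tm^\perp$, which the statement leaves implicit in the phrase ``friend solving the DDPOF''. It is worth saying so explicitly. The $(A,B)$- and $(C,A)$-invariance hypothesis is never used in your argument; it is implied by the boundary equalities.
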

\section{Solving the minimal input/output cardinality DDPOF}
What is missing in \cite{lebon2026geometric} is an efficient algorithm to compute the optimal $\{\Bm,\Cm\}$ that solves Problem~\ref{problem:DDPOF_N}. This algorithm, which is the main contribution of this paper, is described in this section. %To solve our minimal input/output cardinality DDPOF, we exploit Theorems~\ref{thm:OF_otp_charact} and \ref{thm:DDPN}.
Before introducing the proposed algorithm, we first give the following Lemma, direct consequence of Theorems~\ref{thm:OF_otp_charact} and \ref{thm:DDPN}.
\begin{lemma}\label{lem:int}
Under Assumption~\ref{ass:2}, if $\{\Bm,\Cm\}$ is an optimal solution of Problem~\ref{problem:DDPOF_N}, then $\Bm\cap\Cm=\emptyset$, thus $|\Bm|+|\Cm|=|\Bm\,\cup\,\Cm|$.
\end{lemma}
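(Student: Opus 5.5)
By Theorem~\ref{thm:OF_otp_charact}\ref{item:b_OF_otp_charact}, any optimal solution satisfies $\Bm=\partial_+(\Wm,A)$ and $\Cm=\partial_-(\Wm,A)$, where $\Wm$ is the associated $(A,B)$- and $(C,A)$-invariant node set with $\Dm\subseteq\Wm\subseteq\Tm^\perp$. The plan is to use Definition~\ref{def:boundary} directly, which gives
\[
\partial_-(\Wm,A)\subseteq\Wm \quad\text{and}\quad \partial_+(\Wm,A)\subseteq\Wm^\perp .
\]
Since $\Wm\cap\Wm^\perp=\emptyset$, it follows that $\Cm\cap\Bm\subseteq\Wm\cap\Wm^\perp=\emptyset$. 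The cardinality identity is then the inclusion--exclusion formula $|\Bm\cup\Cm|=|\Bm|+|\Cm|-|\Bm\cap\Cm|$ with the last term equal to zero.

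The only point needing care is that the $\Wm$ in Theorem~\ref{thm:OF_otp_charact}\ref{item:b_OF_otp_charact} is a single node set with respect to which both boundaries are taken. It is also the set certifying feasibility through Theorem~\ref{thm:DDPN}\ref{item:a_thm:ddpof}. If several such $\Wm$ could exist for the same $\{\Bm,\Cm\}$, I would fix one for which the characterization in Theorem~\ref{thm:OF_otp_charact}\ref{item:b_OF_otp_charact} holds, and run the argument above with that choice.

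As an alternative that avoids relying on the equalities, I would argue by optimality directly. Suppose some $v\in\Bm\cap\Cm$. If $v\in\Wm$, then $v\notin\partial_+(\Wm,A)$, and by Proposition~\ref{prop:cont_cond} only membership in $\Cm$ is needed for edges exiting $\Wm$, so removing $v$ from $\Bm$ keeps $\Wm$ both $(A,B)$- and $(C,A)$-invariant. If $v\in\Wm^\perp$, then $v\notin\partial_-(\Wm,A)$, and removing $v$ from $\Cm$ likewise preserves both invariances. In either case we obtain a feasible solution of strictly smaller cost, contradicting optimality.

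I expect no real obstacle. The step to check carefully is that the removal in the alternative argument indeed preserves both invariances, which follows from Proposition~\ref{prop:cont_cond} together with conditions~\ref{item:a_prop:AB-invar-equiv-char} and~\ref{item:a_prop:CA-invar-equiv-char} of Propositions~\ref{prop:AB-invar-equiv-char} and~\ref{prop:CA-invar-equiv-char}.
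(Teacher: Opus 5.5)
Your proof is correct, but it takes a different route from the paper's. The paper argues along a single $\Dm$-to-$\Tm$ path. It posits a configuration $v_{p-1}\in\Cm$, $v_p\in\Bm\cap\Cm$, $v_{p+1}\in\Bm$ on some $\Pm_j$ and notes that both $\{v_{p-1},v_p\}$ and $\{v_p,v_{p+1}\}$ then satisfy condition~\ref{item:c_thm:ddpof} of Theorem~\ref{thm:DDPN}. It then declares this redundancy incompatible with minimality.

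You instead work with the certifying set $\Wm$. Once Theorem~\ref{thm:OF_otp_charact}\ref{item:b_OF_otp_charact} identifies $\Bm$ and $\Cm$ with the two boundaries, disjointness is immediate, because $\partial_-(\Wm,A)\subseteq\Wm$ and $\partial_+(\Wm,A)\subseteq\Wm^\perp$. Your alternative pruning argument needs only Proposition~\ref{prop:cont_cond}. A node of $\Bm\cap\Cm$ lying in $\Wm$ is never required as the head of an edge exiting $\Wm$, and one lying in $\Wm^\perp$ is never required as a tail. So it can be dropped from $\Bm$, resp.\ from $\Cm$, and the same $\Wm$ still certifies feasibility.

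Your route is shorter and covers every case. The paper's contradiction hypothesis is a specific local configuration rather than the bare negation $\Bm\cap\Cm\neq\emptyset$. It also never says which node to remove to obtain a strictly cheaper feasible pair. Your pruning step does exactly that, and it does not need Theorem~\ref{thm:OF_otp_charact} at all. The paper's path-based phrasing, for its part, stays in the same language it later uses to relate $\{\Bm,\Cm\}$ to edge cuts in $\overline{\Gm}$.
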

\begin{proof}
By assumption, $\{ \Bm, \Cm\} $ is the minimal cardinality solution of Problem~\ref{problem:DDPOF_N}, and hence it satisfies item~\ref{item:c_thm:ddpof} of Theorem~\ref{thm:DDPN}, and also item~\ref{item:c_OF_otp_charact} of Theorem~\ref{thm:OF_otp_charact}. By contradiction, assume that there exists in $\Gm$ a $\Dm$-to-$\Tm$ path $\Pm_j=\{v_1,\dots,v_{\kappa_j}\}$ of length $\kappa_j-1\geq1$, with $v_1\in\Dm$ and $v_{\kappa_j}\in\Tm$, a sub-path of the form $\{v_{p-1},v_{p},v_{p+1}\}$ with $v_{p-1}\in\Cm$, $v_{p+1}\in\Bm$, and $v_{p}$ being both in $\Cm$ and in $\Bm$, $p\in\{1,\dots,\kappa_j-1\}$, $j\in\{1,\dots,\ell\}$. However, this means that for the path $\Pm_j$ both the sub-path $\{v_{p-1},v_{p}\}$ with $v_{p-1}\in\Cm$ and $v_{p}\in\Bm$, and the sub-path $\{v_{p},v_{p+1}\}$ with $v_{p}\in\Cm$ and $v_{p}\in\Bm$ satisfy the feasibility condition~\ref{item:c_thm:ddpof} of Theorem~\ref{thm:DDPN}. The existence of multiple sub-paths of $\Pm_j$ of length 1 is a contradiction of minimality.
\end{proof}

\subsection{Min-cut/max-flow algorithm revisited}\label{sec:mincut}

What Theorems~\ref{thm:OF_otp_charact} and \ref{thm:DDPN} indicate is that, for each $\Dm$-to-$\Tm$ path $\Pm_j$, at least a node of $\Bm$ and a node $\Cm$ must belong to $\Pm_j$ and be pair-wise adjacent in $\Pm_j$, with the node of $\Cm$ coming before the node of $\Bm$ in $\Pm_j$. The solution pair $\{ \Bm, \Cm\} $ thus corresponds to a cut set of nodes separating $ \Dm $ from $ \Tm$. What Problem~\ref{problem:DDPOF_N} requires is that such cut set must have minimal cardinality. The problem can be seen as a special case of a min-cut/max-flow problem, whose optimal solution can be computed in polynomial time (see Table 1 of \cite{goldberg1988new}). 
The strategy to achieve this decoupling with the minimal number of input and output nodes is summarized in Algorithms~\ref{alg:one}--\ref{alg:three}.

\begin{algorithm}[hbt!]
\caption{DDPOF by min-cut/max-flow}\label{alg:one}
\begin{algorithmic}
\State \textbf{Data:} {$\mathcal{G}$, $\mathcal{D}$, $\mathcal{T}$}
\State \textbf{Result:} {$\Bm,\Cm$}
\State \underline{$[\overline{\mathcal{G}};v^\Dm; v^\Tm]\gets$  \textbf{Algorithm~\ref{alg:two}}};\hspace{8pt} \textit{// Given a network $\mathcal{G}$ with $n$ nodes and $q$ edges and the nodes subsets $\mathcal{D}$ and $\mathcal{T}$, construct an extended network $\overline{\mathcal{G}}$}
%with $2n+2$ nodes and $q+n+|\mathcal{D}|+|\mathcal{T}|$ edges with appropriate weights. The extra nodes are obtained by duplicating the existing nodes and adding a virtual source node $ v^\Dm$ and a virtual sink node $ v^\Tm$ connected to all disturbance, resp. target nodes.}
\vspace{0.1cm}
\State \underline{$[\overline{\Jm}^\Dm;\overline{\Jm}^\Tm]\gets$ \texttt{maxflow}($\overline{\mathcal{G}},v^\Dm,v^\Tm$)};\hspace{8pt} \textit{// Using the max-flow/min-cut algorithm from $v^\Dm$ to $v^\Tm$, partition the nodes of $\overline{\mathcal{G}}$ into the nodes that are and are not reachable from $v^\Dm$, resp. $\overline{\Jm}^\Dm$ and $\overline{\Jm}^\Tm$}
\State \underline{$[\Bm;\Cm]\gets$ \textbf{Algorithm~\ref{alg:three}}}.\hspace{8pt} \textit{// Given $\overline{\Jm}^\Dm$ and $\overline{\Jm}^\Tm$, find the minimal cardinality set of control input nodes and output nodes $\Bm,\Cm$ that solves the DDPOF}
\end{algorithmic}
\end{algorithm}

\begin{algorithm}[hbt!]
\caption{Construct the extended network $\overline{\mathcal{G}}$ }\label{alg:two}
\begin{algorithmic}
\State \textbf{Data:} {$\mathcal{G}$, $\mathcal{D}$, $\mathcal{T}$}
\State \textbf{Result:} {$\overline{\mathcal{G}}= ( \overline{\Vm}, \overline{\Em}, \overline{A}) $, $v^\Dm$, $v^\Tm$}
\State Construct $ \overline{\Vm}$: $\overline{\Vm}=\Vm\cup\{v_{n+1}, \, \ldots, v_{2n}\}\cup\{v_{2n+1}, \, \ldots, v_{3n}\}\cup\{v_{3n+1},v_{3n+2}\}$;%\Comment{Triple the network nodes and add extra sink and source nodes}
\vspace{-0mm}
\State Construct $ \overline{\Em}$: 
\bite
\item If $\exists\,(v_i,\,v_j)\in\Em$ then $\exists\,(v_{i},\,v_{j+2n})\,\land\,(v_{i+n},\,v_{j}) \in\overline{\Em}$;%\Comment{Include original edges}
\vspace{-0mm}
\item $(v_{i+2n},\,v_{i})\,\land\,(v_i,\,v_{i+n})\in\overline{\Em},\,\forall\,i\in\{1,\dots,n\}$;%\Comment{Connect each pair of triplicated nodes}
\vspace{-0mm}
\item $(v_{3n+1},\,v_{d})\in\overline{\Em},\,\forall\,v_d\in\Dm$;%\Comment{Connect the source node to all nodes with disturbances}
\vspace{-0mm}
\item $(v_t,\,v_{3n+2})\in\overline{\Em},\,\forall\,v_t\in\Tm$;%\Comment{Connect all target nodes to the sink node}
\vspace{-0mm}
\eite
\State Construct $ \overline{A}$:
\vspace{-0mm}
\bite
\item If $\exists\,(v_i,\,v_j)\in\Em$ then $ \overline{A}_{v_j,v_{i+n}} =\overline{A}_{v_{j+2n},v_i}=+\infty$;%\Comment{\,\,Put all ``true'' edge weights to infinity}
\vspace{-0mm}
\item $ \forall \; v_i \in \Em $, $ \overline{A}_{v_{i+n}, v_i} =\overline{A}_{v_{i}, v_{i+2n}}=1 $;%\Comment{Put weight 1 in all edges between triples}
\vspace{-0mm}
\item $ \forall v_i \in \Dm $, $\overline{A}_{v_i,v_{3n+1}}=+\infty$;%\Comment{Put infinite weights in all new edges between source node and nodes in $ \Dm$}
\vspace{-0mm}
\item $ \forall v_i \in \Tm $, $\overline{A}_{v_{3n+2},v_i}=+\infty$;%\Comment{Put infinite weights in all new edges between nodes in $ \Tm $ and sink node}
\vspace{-0mm}

\eite
\State $ v^\Dm = v_{3n+1}$; $ v^\Tm = v_{3n+2}$.
\vspace{0mm}
\end{algorithmic}
\end{algorithm}

\begin{algorithm}[hbt!]
\caption{Minimal input/output solution $\{\Bm,\Cm\}$}\label{alg:three}
\begin{algorithmic}
\State \textbf{Data:} {$\overline{\Gm}$, $\overline{\Jm}^\Dm$, $\overline{\Jm}^\Tm$}
\State \textbf{Result:} {$\Bm,\Cm$}
\State $\Bm=[\,]$; $\Cm=[\,]$;\hspace{8pt} \textit{// Store into $\Bm$ resp. $\Cm$ the tail resp. the head nodes of the edges between $\overline{\Jm}^\Dm$ and $\overline{\Jm}^\Tm$}
\For{$i=1:|\overline{\Jm}^\Dm|$}\For{$j=1:|\overline{\Jm}^\Tm|$}
\If{$(\overline{\Jm}^\Dm(i),\overline{\Jm}^\Tm(j)) \in \overline{\Em}$}
\State{$\Cm=\Cm \cup \overline{\Jm}^\Dm(i)$;}
\State{$\Bm=\Bm \cup \overline{\Jm}^\Tm(j)$;}
\EndIf\EndFor\EndFor
.
\end{algorithmic}
\end{algorithm}

Under Assumption~\ref{ass:2}, the algorithms always converge in finite time to an optimal solution, since the number of nodes and edges is finite.

% ----------\rednote{CA: move/remove}

% The optimal solution is not always unique, as many equal-cardinality feasible control input sets can exist.

Here are some more details on some parts of the algorithms.

\noindent\textbf{Algorithm~\ref{alg:two}: Construct the extended network $\bm{\overline{\mathcal{G}}}$.}
\label{sec:extended} The minimal input/output node-cut problem can be transformed into a minimal edge-cut problem on an extended network $\overline{\Gm}=(\overline{\Vm},\overline{\Em},\overline{A})$, where it can be solved via a standard min-cut/max-flow algorithm. 
% Since we are interested in finding a minimal node-cut (whereas a min-cut/max-flow algorithm produces a minimal edge cut), the network $\Gm$ must be expanded suitably into a new extended network $\overline{\Gm}=(\overline{\Vm},\overline{\Em},\overline{A})$ allowing to identify such nodes by means of a minimal edge-cut.
To do so, for each node $v_i$, two new nodes $v_{i+n}$ and $v_{i+2n}$, and new edges $(v_i,\,v_{i+n})$ and $(v_{i+2n},\,v_{i})$ are generated, with unitary weights. For all preexisting edges in $\Vm$ between nodes $v_i$ and $v_j$, an edge is generated in $\overline{\Gm}$ between $v_i$ and $v_{j+2n}$, and an edge is generated in $\overline{\Gm}$ between $v_{i+n}$ and $v_{j}$. For those two edges the weight is put at infinity (practically, at a very high number), see Fig.~\ref{fig:mini}. In such a way, an edge-cut involving $(v_i,\,v_{i+n})$ and $(v_{j+2n},\,v_{j})$ corresponds to selecting the node $v_i\in\Cm$ and $v_j\in\Bm$. Two additional nodes must be added, $v^\Dm=v_{3n+1}$ and $v^\Tm=v_{3n+2}$, so that $|\overline{\Vm}|=3n+2$. Node $v^\Dm$ acts as a source node and is connected via $|\Dm|$ edges to all disturbance nodes in $\Dm$. All target nodes in $\Tm$ are connected via $|\Tm|$ edges to the sink node $v^\Tm$. The total number of edges in $\overline{\Em}$ is $2(|\Vm|+|\Em|)+|\Dm|+|\Tm|$. On $ \overline{\Gm} $ all paths from $ v^\Dm $ to $ v^\Tm $ have even length, i.e., must contain an odd number of nodes.

\noindent\textbf{Min-cut/Max-flow.} The min-cut/max-flow algorithm is a minimal edge-cut algorithm performed on $ \overline{\Gm}$. The cut partitions the network nodes in two disjoint sets, one, $ \overline{\Jm}^\Dm$, containing the nodes that are reachable from the disturbances, the other, $ \overline{\Jm}^\Tm $, containing the targets and all the nodes that cannot be reached by the disturbances.
A standard min-cut/max-flow algorithm (e.g., \cite{ford1956maximal}) is performed on $\overline{\Gm}=(\overline{\Vm},\overline{\Em}, \overline{A})$, from the source node $ v^\Dm $ to the sink node $ v^\Tm $. 

\noindent\textbf{Algorithm~\ref{alg:three}: find the minimal input/output $\{\Bm,\Cm\}$.} The meaning of Algorithm~\ref{alg:three} is explained in the following proposition.
\begin{proposition}
\label{prop-cut-extended}
Given $ \Gm $ and its extended network $ \overline{\Gm}$, an edge cut set in $ \overline{\Gm}$ corresponds to a node cuts set in $ \Gm$. The min-cut value on $ \overline{\Gm} $ is equal to $ |\Bm|+|\Cm|$, i.e., to the sum of the cardinalities of the sets of input and output nodes associated to the solution.
\end{proposition}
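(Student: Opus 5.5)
The plan is to set up an explicit correspondence between finite-capacity edge cuts of $\overline{\Gm}$ and feasible pairs $\{\Bm,\Cm\}$ in the sense of item~\ref{item:c_thm:ddpof} of Theorem~\ref{thm:DDPN}, and then compare costs in both directions. The first step is to record the structure of $\overline{\Gm}$. Every $v^\Dm$-to-$v^\Tm$ path enters $\Dm$, and then repeats the pattern
\[
v_i \to v_{i+n} \to v_j \quad\text{or}\quad v_i\to v_{j+2n}\to v_j ,
\]
one pattern for each edge $(v_i,v_j)\in\Em$, until it reaches $\Tm$ and exits to $v^\Tm$. Hence $v^\Dm$-to-$v^\Tm$ paths in $\overline{\Gm}$ project onto $\Dm$-to-$\Tm$ walks in $\Gm$, and every $\Dm$-to-$\Tm$ path of $\Gm$ lifts to at least one path in $\overline{\Gm}$. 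The only finite-weight edges are $(v_i,v_{i+n})$, which I label ``$v_i\in\Cm$'', and $(v_{j+2n},v_j)$, which I label ``$v_j\in\Bm$''. Consequently a finite cut consists only of such edges, and its value equals the number of selected $\Cm$-labels plus the number of selected $\Bm$-labels. This is exactly what Algorithm~\ref{alg:three} reads off from the partition $\overline{\Jm}^\Dm,\overline{\Jm}^\Tm$.

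The second step proves that a finite cut gives a feasible pair. Take the cut induced by the reachable set $\overline{\Jm}^\Dm$. Fix a $\Dm$-to-$\Tm$ path $v_1,\dots,v_\kappa$ in $\Gm$. Since $v_1\in\overline{\Jm}^\Dm$ and $v_\kappa\in\overline{\Jm}^\Tm$, there is a first index $p$ with $v_p\in\overline{\Jm}^\Dm$ and $v_{p+1}\in\overline{\Jm}^\Tm$. The infinite edges $(v_p,v_{p+1+2n})$ and $(v_{p+n},v_{p+1})$ cannot be cut, which forces $v_{p+1+2n}\in\overline{\Jm}^\Dm$ and $v_{p+n}\in\overline{\Jm}^\Tm$. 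Therefore both $(v_p,v_{p+n})$ and $(v_{p+1+2n},v_{p+1})$ are cut edges. This gives $v_p\in\Cm$ and $v_{p+1}\in\Bm$, i.e.\ a sub-path $\{v_p,v_{p+1}\}$ with the required $\Cm\to\Bm$ pattern. The cut value is at least $|\Bm|+|\Cm|$, with equality when Algorithm~\ref{alg:three} collects only these labels; if a node carries both labels, both edges are counted, which is consistent with the cost $|\Bm|+|\Cm|$.

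The third step proves the converse: a feasible pair gives a cut of the same cost. Given feasible $\{\Bm,\Cm\}$, remove the edges $(v_i,v_{i+n})$ for $v_i\in\Cm$ and $(v_{j+2n},v_j)$ for $v_j\in\Bm$. Suppose a $v^\Dm$-to-$v^\Tm$ path survived. Its projection contains a $\Dm$-to-$\Tm$ path, by shortcutting cycles, and that path has some edge $(v_p,v_{p+1})$ with $v_p\in\Cm$ and $v_{p+1}\in\Bm$. Both lifts of this edge use a removed unit edge, so the surviving path cannot have crossed it. Making this rigorous is the main obstacle: the lifted walk may traverse different graph edges than the shortcut path, and in particular may step over a given $\Cm\to\Bm$ edge through a different intermediate node.

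To handle it, I would argue via the reachable set. Define $R$ as the set of nodes $v\in\Vm$ reachable from $\Dm$ in $\Gm$ without using any edge $(v_i,v_j)$ with $v_i\in\Cm$ and $v_j\in\Bm$. Feasibility gives $R\cap\Tm=\emptyset$. I would then show by induction along any surviving path in $\overline{\Gm}$ that every original-copy node it visits lies in $R$. The inductive step splits into the two lift cases. In the first case the path goes through $v_{i+n}$, so $v_i\notin\Cm$, and the edge $(v_i,v_j)$ is not of $\Cm\to\Bm$ type. In the second case the path goes through $v_{j+2n}$, so $v_j\notin\Bm$, and again the edge is not of $\Cm\to\Bm$ type. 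In both cases $v_j\in R$, so the surviving path never reaches $\Tm$.

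Combining the two directions, the minimum cut value equals $\min(|\Bm|+|\Cm|)$ over feasible pairs. Lemma~\ref{lem:int} then lets me identify this minimum with $|\Bm\cup\Cm|$ at the optimum.
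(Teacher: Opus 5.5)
Your proof is correct, and it goes well beyond the paper's argument, which amounts to your first step only. Because the duplicated ``true'' edges carry weight $+\infty$, a finite cut uses only the unit edges $(v_i,v_{i+n})$ and $(v_{i+2n},v_i)$, so its value is the number of $\Cm$- and $\Bm$-labels read off. The paper asserts the correspondence between edge cuts in $\overline{\Gm}$ and node cuts in $\Gm$ ``by construction''. It defers feasibility and optimality to Theorem~\ref{thm:main}, whose proof argues by contradiction through a case analysis on how $\Cm$- and $\Bm$-nodes sit along a path.

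You instead build a two-way, cost-preserving correspondence. Your first-crossing argument goes as follows: the infinite edges $(v_p,v_{p+1+2n})$ and $(v_{p+n},v_{p+1})$ cannot cross, so both unit edges attached to the same $\Gm$-edge must be cut. This turns any finite cut into a pair satisfying item~\ref{item:c_thm:ddpof} of Theorem~\ref{thm:DDPN}. Your induction on $R$ turns any feasible pair into a disconnecting edge set of the same cost. Together these give min-cut value $=$ optimum of Problem~\ref{problem:DDPOF_N} directly, so you essentially also prove Theorem~\ref{thm:main}, without case analysis. The paper's version is shorter but leaves that correspondence unproved.

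Two minor points. First, the obstacle you flag in step three is not real. A simple path in $\overline{\Gm}$ visits each original node once, and each segment $v_i\to v_{i+n}\to v_j$ or $v_i\to v_{j+2n}\to v_j$ projects to exactly the edge $(v_i,v_j)$. Loop-erasing a walk also keeps only edges of the walk. Your $R$-argument is nonetheless fine. Second, you do not need Lemma~\ref{lem:int}. In a partition cut, $(v_{i+2n},v_i)$ crossing forces $v_i\in\overline{\Jm}^\Tm$, while $(v_i,v_{i+n})$ crossing forces $v_i\in\overline{\Jm}^\Dm$. So no node carries both labels, and the cut value is exactly (not merely at least) $|\Bm|+|\Cm|$.
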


\begin{proof}
In $ \overline{\Gm}$ all ``true'' edges of $ \Gm $ are duplicated and have $ \infty $ weight, hence they are never selected by the min-cut algorithm. Therefore all edges on which each cut set is computed in $ \overline{\Gm}$ must have the form $(v_i,v_{i+n})$ or $(v_{i+2n},v_i)$, and these edge weights are all equal to 1. By construction, the value of each cut is equal to the number of edges across the cut and equal also to the number of nodes in the associated node cut sets in $ \Gm$. This implies that the min-cut value is equal to $ |\Bm|+|\Cm|$. 
\end{proof}

\begin{figure}[htb!]
    \centering
    \includegraphics[clip=true, width=0.6\linewidth]{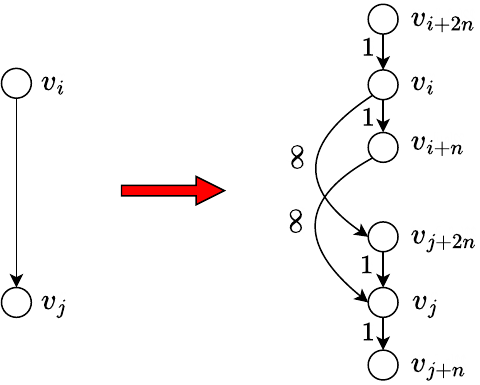}
    \caption{Algorithm~\ref{alg:two}. Representation of an edge $(v_i,v_j)\in\Em$ in the extended network $\overline{\Gm}$.}
    \label{fig:mini}
    %\vspace{-0.35cm}
\end{figure}

\subsection{Optimality}

The optimality of the solution is summarized in the following statement.
\begin{theorem}\label{thm:main}
Under Assumption~\ref{ass:2}, the solution $\{\Bm,\Cm\}$ computed by means of Algorithms~\ref{alg:one}--\ref{alg:three} is the optimal solution of Problem~\ref{problem:DDPOF_N}. 
\end{theorem}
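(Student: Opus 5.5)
We reduce the theorem to a bijection between finite-capacity $v^\Dm$--$v^\Tm$ cuts in $\overline{\Gm}$ and feasible pairs $\{\Bm,\Cm\}$, with cut value $|\Bm|+|\Cm|$. Optimality then follows from the max-flow/min-cut theorem \cite{ford1956maximal}.

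\textbf{Path correspondence.} First fix how paths in $\Gm$ appear in $\overline{\Gm}$. By the construction in Algorithm~\ref{alg:two}, every $v^\Dm$-to-$v^\Tm$ path in $\overline{\Gm}$ has the form
\[
v^\Dm\to v_{1}\to v_{1+n}\to v_{2+2n}\to v_{2}\to v_{2+n}\to\cdots\to v_{\kappa+2n}\to v_{\kappa}\to v^\Tm ,
\]
where $\{v_1,\dots,v_\kappa\}$ is a $\Dm$-to-$\Tm$ path (or walk) in $\Gm$. Conversely, every $\Dm$-to-$\Tm$ path of $\Gm$ lifts to such a path. Each original edge $(v_p,v_{p+1})$ becomes the three-edge segment
\[
v_p\to v_{p+n}\to v_{p+1+2n}\to v_{p+1}.
\]
Of these, only the first and last edges have finite (unit) capacity, and they are the only edges a finite cut may use.

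\textbf{Feasible pair $\Rightarrow$ cut (upper bound on the min-cut).} Let $\{\Bm,\Cm\}$ be feasible; by Theorem~\ref{thm:OF_otp_charact} we may take $\Cm=\partial_-(\Wm,A)$ and $\Bm=\partial_+(\Wm,A)$. Put on the source side
\[
\{v^\Dm\}\cup\Wm\cup\{v_{i+n}: v_i\in\Wm\setminus\Cm\}\cup\{v_{j+2n}: v_j\in\Wm\cup\Bm\}.
\]
We check that no infinite-capacity edge leaves this set; this uses condition~\ref{item:a_prop:AB-invar-equiv-char} of Proposition~\ref{prop:AB-invar-equiv-char} and condition~\ref{item:a_prop:CA-invar-equiv-char} of Proposition~\ref{prop:CA-invar-equiv-char}. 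The unit edges crossing the cut are then exactly $(v_i,v_{i+n})$ for $v_i\in\Cm$ and $(v_{j+2n},v_j)$ for $v_j\in\Bm$. Since $\Bm\cap\Cm=\emptyset$ by Lemma~\ref{lem:int}, the cut value is $|\Bm|+|\Cm|$.

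\textbf{Cut $\Rightarrow$ feasible pair (lower bound).} Let $(\overline{\Jm}^\Dm,\overline{\Jm}^\Tm)$ be the minimum cut returned by \texttt{maxflow}. Define $\Wm=\overline{\Jm}^\Dm\cap\Vm$, and let $\Cm,\Bm$ be as produced by Algorithm~\ref{alg:three}. Because the cut is finite, no infinite edge crosses it, which gives:
\begin{itemize}
\item $\Dm\subseteq\Wm$, since the edges $(v^\Dm,v_d)$ have infinite capacity;
\item $\Wm\cap\Tm=\emptyset$, since the edges $(v_t,v^\Tm)$ have infinite capacity;
\item for every edge $(v_i,v_j)\in\Em$ with $v_i\in\Wm$ and $v_j\notin\Wm$, the lifted segment must be cut. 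As $v_i\to v_{j+2n}$ is infinite, $v_{j+2n}$ is on the source side, so $(v_{j+2n},v_j)$ is cut and $v_j\in\Bm$. If $v_{i+n}$ were on the source side, the infinite edge $v_{i+n}\to v_j$ would cross the cut, so $v_{i+n}$ is on the sink side, $(v_i,v_{i+n})$ is cut, and $v_i\in\Cm$.
\end{itemize}
Hence every edge exiting $\Wm$ goes from $\Cm$ to $\Bm$. By Proposition~\ref{prop:cont_cond}, $\Wm$ is $(A,B)$- and $(C,A)$-invariant with $\Dm\subseteq\Wm\subseteq\Tm^\perp$. By Theorem~\ref{thm:DDPN}, $\{\Bm,\Cm\}$ is feasible. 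By Proposition~\ref{prop-cut-extended}, its cost $|\Bm|+|\Cm|$ is at most the cut value.

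\textbf{Conclusion and main obstacle.} The algorithm returns a feasible pair whose cost is at most the min-cut value, and the min-cut value is at most the optimum of Problem~\ref{problem:DDPOF_N}. So the returned pair is optimal, and the bound chain shows it satisfies the characterization in Theorem~\ref{thm:OF_otp_charact}~\ref{item:c_OF_otp_charact}. Polynomial time follows from $|\overline{\Vm}|=3n+2$, $|\overline{\Em}|=2(n+q)+|\Dm|+|\Tm|$, and the polynomial complexity of max-flow \cite{goldberg1988new}.

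The delicate part is the cost accounting when a node is simultaneously an in-boundary node of one exiting edge and an out-boundary node of another. Each such node contributes two distinct unit edges, $(v_i,v_{i+n})$ and $(v_{i+2n},v_i)$, so cut value equals $|\Bm|+|\Cm|$ rather than $|\Bm\cup\Cm|$. One must also rule out spurious cut edges that are not tied to $\partial_\pm(\Wm,A)$; minimality of the cut should forbid them, and this needs careful checking of which extended nodes lie on which side.
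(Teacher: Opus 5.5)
Your argument is correct in substance, but it takes a different route from the paper.

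\textbf{Comparison with the paper.} The paper proves feasibility by contradiction on item~\ref{item:c_thm:ddpof} of Theorem~\ref{thm:DDPN}. It takes a $\Dm$-to-$\Tm$ path with no adjacent pair $v_p\in\Cm$, $v_{p+1}\in\Bm$ and exhibits an uncut lifted path in $\overline{\Gm}$. It works out the distance-2 and swapped-order configurations and handles the rest ``w.l.o.g.''. For optimality, it argues that a cheaper feasible pair would induce a smaller edge cut. That step tacitly uses the converse of Proposition~\ref{prop-cut-extended} (every feasible pair yields a cut of value at most its cost) without proving it.

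You instead work with the node set $\Wm=\overline{\Jm}^\Dm\cap\Vm$. A local check at each edge leaving $\Wm$, together with Proposition~\ref{prop:cont_cond}, gives feasibility via item~\ref{item:a_thm:ddpof} of Theorem~\ref{thm:DDPN} for all path configurations at once. Your explicit source side, built from a feasible witness $\Wm$, supplies exactly the step the paper leaves implicit. The resulting sandwich (cost of the returned pair $=$ min-cut value $\le$ optimum) is shorter and more complete. The paper's version stays closer to the path intuition of item~\ref{item:c_thm:ddpof}, but it rests on that unproved converse.

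\textbf{Corrections.} Your ``path correspondence'' paragraph misdescribes Algorithm~\ref{alg:two}. There is no edge $(v_{p+n},v_{p+1+2n})$. An original edge $(v_p,v_{p+1})$ becomes two \emph{parallel} routes, $v_p\to v_{p+n}\to v_{p+1}$ and $v_p\to v_{p+1+2n}\to v_{p+1}$. This matters: with your series gadget, cutting either unit edge would block the edge. The min cut would then be an ordinary node cut, not one enforcing $v_p\in\Cm$ \emph{and} $v_{p+1}\in\Bm$. Your third bullet already uses the correct edges $(v_i,v_{j+2n})$ and $(v_{i+n},v_j)$, and your upper-bound infinite-edge check also goes through with them. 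So just drop or fix that paragraph.

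Lemma~\ref{lem:int} is stated for optimal pairs, so citing it for an arbitrary feasible pair is a misapplication. It is also unnecessary: $\partial_-(\Wm,A)\subseteq\Wm$ and $\partial_+(\Wm,A)\subseteq\Wm^\perp$ are disjoint, and likewise in a finite cut $\Cm\subseteq\Wm$ and $\Bm\subseteq\Wm^\perp$. This removes both worries in your last paragraph. No node can be double-counted. Extra cut edges need not be excluded, because they do not affect feasibility and the cost equals the cut value. Applying your upper-bound construction to the same $\Wm$ then forces $\Bm=\partial_+(\Wm,A)$ and $\Cm=\partial_-(\Wm,A)$.

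Finally, state explicitly that the min cut is finite. This holds because Assumption~\ref{ass:2} makes $\Wm=\Dm$ a feasible witness.
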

\begin{proof}
We start by noting that a cut set in $\overline{\Gm}$ can only involve edges of the form $(v_i,v_{i+n})$ or $(v_{i+2n},v_i)$, $i\in\{1,\dots,n\}$, since all other edges are at $+\infty$. Moreover, a cut involving $(v_i,v_{i+n})$ in $\overline{\Gm}$ identifies a node of $\Cm$ in $\Gm$, while a cut involving $(v_{i+2n},v_i)$ in $\overline{\Gm}$ identifies a node of $\Bm$ in $\Gm$. Alternatively, saying that $\{\Bm,\Cm\}$ identifies a cut set of edges for $\overline{\Gm}$ means that the set of edges $\Hm=\{(v_i,\,v_{i+n}),(v_{j+2n},v_j)\in\overline{\Em}\,\,\text{s.t.}\,\,v_i\in\Cm,\,v_j\in\Bm\}$ is a cut set of edges for $\overline{\Gm}$, i.e., when removed, there are no more direct paths connecting $v^\Dm$ to $v^\Tm$. If we call $\hat{\Bm}=\{v_{i+2n}\,\,\text{s.t.}\,\,v_i\in\Bm\,\,\text{and}\,\,(v_{i+2n},\,v_i)\in\Hm\}$, and $\hat{\Cm}=\{v_{i+n}\,\,\text{s.t.}\,\,v_i\in\Cm\,\,\text{and}\,\,(v_{i},\,v_{i+n})\in\Hm\}$ then $\Hm=(\hat{\Bm}\times\Bm)\cup(\Cm\times\hat{\Cm})$. 
Let \(\overline{\Jm}^\Dm \subseteq \overline{\Vm} \) and \( \overline{\Jm}^\Tm = \overline{\Vm} \setminus \overline{\Jm}^\Dm \) be the partition of \( \overline{\Vm} \) induced by the cut.
% with \(\overline{\Jm}^\Dm \) denoting the subset of network nodes that are still reachable from $v^\Dm$ when the set of edges $\overline{\Cm}$ is removed from $\overline{\Gm}$. 
Denote by $\overline{\Em}^{\Dm}$ and $\overline{\Em}^{\Tm}$ the sets of edges within $\overline{\Jm}^\Dm$ and $\overline{\Jm}^\Tm$, respectively. The set of edges with tail in $\overline{\Jm}^\Dm$ and head in $\overline{\Jm}^\Tm$ is denoted by $\overline{\Em}^{\Dm\to\Tm}$, while the set of edges with tail in $\overline{\Jm}^\Tm$ and head in $\overline{\Jm}^\Dm$ is denoted by $\overline{\Em}^{\Tm\to\Dm}$. By construction, it follows $\overline{\Em}^{\Dm\to\Tm}=\Hm$. As a consequence, $\Cm,\hat{\Bm}\subseteq\overline{\Jm}^\Dm$ and $\Bm,\hat{\Cm}\subseteq\overline{\Jm}^\Tm$. In fact, if by contradiction a node in $\overline{\Jm}^\Tm$ is reached by an edge with tail in $\overline{\Jm}^\Dm\setminus(\Cm\cup\hat{\Bm})$, then it is also reachable from $v^\Dm$, which is a contradiction according to the definition of $ \overline{\Jm}^\Tm $ given above. Similarly, if by contradiction a node in $\overline{\Jm}^\Tm\setminus(\hat{\Cm}\cup\Bm)$ is reached by an edge with tail in $\overline{\Jm}^\Dm$, then it is also reachable from $v^\Dm$, which is a contradiction according to the definition of $ \overline{\Jm}^\Dm $ given above. 
By construction we write the extended network as $\overline{\Gm}=(\overline{\Vm},\overline{\Em}, \overline{A})=(\overline{\Jm}^\Dm\cup\overline{\Jm}^\Tm,\overline{\Em}^{\Dm}\cup\overline{\Em}^{\Tm}\cup\overline{\Em}^{\Tm\to\Dm}\cup\Hm, \overline{A})$. To prove that the minimal cut in $\overline{\Gm}$ is the optimal solution $\{\Bm,\Cm\}$ of Problem~\ref{problem:DDPOF_N}, we first show that this cut is always a feasible solution, and secondly that it is the one with minimal cardinality. 
By construction, if $\exists\,(v_i,v_j)\in\Em$ then $\exists\,(v_i,v_{j+n}),\,(v_{i+2n},v_j)\in\overline{\Em}$. We claim that a minimal cut from $v^\Dm$ to $v^\Tm$ in $\overline{\Gm}$ must identify nodes in $\Gm$ that satisfy item~\ref{item:c_thm:ddpof} of Theorem~\ref{thm:DDPN}, thus proving the existence of a set $\Wm$ both controlled and conditioned invariant in $\Vm$ that contains $\Dm$ and is contained in $\Vm\setminus\Tm$. Let $\Pm_1,\dots,\Pm_\ell$ be all possible $\Dm$-to-$\Tm$ paths in $\Gm$. In fact, w.l.o.g. assume by contradiction that the cut in $\overline{\Gm}$ identifies the sets $\Bm$, $\Cm$ in $\Gm$ s.t. there exists a $\Dm$-to-$\Tm$ path $\Pm_j=\{v_1,\dots,v_{\kappa_j}\}$ of length $\kappa_j-1\geq1$, with $v_1\in\Dm$ and $v_{\kappa_j}\in\Tm$ for which $\Cm\cap\Pm_j\neq\emptyset$ and $\Bm\cap\Pm_j\neq\emptyset$ (note that if $\exists\,\Pm_j$ s.t. $\Cm\cap\Pm_j=\emptyset$ or $\Bm\cap\Pm_j=\emptyset$ then $\Bm$, $\Cm$ trivially do not identify a cut in $\overline{\Gm}$) for which $\nexists$ a sub-path of length $1$ of the form $\{v_{p},v_{p+1}\}$ with $v_{p}\in\Cm$ and $v_{p+1}\in\Bm$, $p\in\{1,\dots,\kappa_j-1\}$, $j\in\{1,\dots,\ell\}$. First, we consider the case in which in $\Pm_j$ there exists a sub-path where a node in $\Cm$ and a node in $\Bm$ are at distance 2, since, w.l.o.g, the same lines of arguments apply when the nodes in $\Cm$ and $\Bm$ are placed at distance $>2$. Assume that $\exists$ a sub-path $\{v_a,v_b,v_c,v_d,v_e\}$ of $\Pm_j$ with $v_b\in\Cm$ and $v_d\in\Bm$, that is, by applying Algorithm~\ref{alg:two}, that the edges $(v_b,v_{b+n}),\,(v_{d+2n},v_d)$ contribute to the cut $\Hm$ in $\overline{\Gm}$. However, there exists in $\overline{\Gm}$ the sub-path $\{v_{a+2n},v_a,v_{a+n},v_b,v_{c+2n},v_c,v_{c+n},v_d,v_{e+2n},v_e,v_{e+n}\}$, thus $v_b$ and $v_d$ do not contribute to the cut $\Hm$. We consider now the two distinct cases where a sub-path of length 1 exists, but the positions of the nodes in $\Bm$ and $\Cm$ are interchanged. Assume that $\exists$ a sub-path $\{v_a,v_b,v_c,v_d,v_e\}$ of $\Pm_j$ and note that if $v_c\in\Bm$ and $v_b\in\Cm$ then cutting the edges $(v_b,v_{b+n}),\,(v_{c+2n},v_c)$ is s.t. the sub-paths $\overline{\Pm}_j^\Dm:=\{v_{a+2n},v_a,v_{a+n},v_{b+2n},v_b,v_{c+2n}\}\subseteq\overline{\Jm}^\Dm$ and $\overline{\Pm}_j^\Tm:=\{v_{b+n},v_c,v_{c+n},v_d,v_{d+n},v_{e+2n},v_e,v_{e+n}\}\subseteq\overline{\Jm}^\Tm$, unless if calling $\overline{\Pm}_h$ a sub-path of another $\Dm$-to-$\Tm$ path $\Pm_h$ in $\Gm$, $h\neq j$, after applying Algorithm~\ref{alg:two}, there exists either an edge with tail in $\overline{\Pm}_j^\Dm$ and head in $\overline{\Pm}_h$, or and edge with tail in $\overline{\Pm}_h$ and head in $\overline{\Pm}_j^\Tm$. However, if such a path exists between $v^\Dm$ and $v^\Tm$, then $\Hm$ is not a cut. Note that, by construction, $v_{c+2n}\in\hat{\Bm}$ and $v_{b+n}\in\hat{\Cm}$. Using the same lines of arguments it is possible to show that swapping $v_b\in\Bm$ and $v_c\in\Cm$ is also not allowed and does not contribute to the cut $\Hm$. Finally, optimality arises from the minimality of the cut set's cardinality. By contradiction assume that $\exists$ $\{\Bm^\prime,\Cm^\prime\}$ s.t. $ | \Bm^\prime |+|\Cm^\prime|< | \Bm|+|\Cm|$ that solves Problem~\ref{problem:DDPOF_N}. This means that the cardinalities of the edge-cuts induced by $\{\Bm^\prime,\Cm^\prime\}$ and $\{\Bm,\Cm\}$ in $\overline{\Gm}$ are s.t. $|\Hm^\prime|<|\Hm|$. However, this means that the min-cut/max-flow algorithm is not optimal, which is a contradiction. The optimal solution $\{\Bm,\Cm\}$ identified by the minimal edge-cut $\Hm$ is the one satisfying item~\ref{item:c_OF_otp_charact} of Theorem~\ref{thm:OF_otp_charact}.
\end{proof}

\begin{remark}
     It is important to highlight that, in general, the optimal solution $\{\Bm,\Cm\}$ of Problem~\ref{problem:DDPOF_N} is not unique. In order to obtain all optimal solutions, specialized techniques and extensions of the min-cut/max-flow algorithm (e.g., residual graph analysis, Gomory-Hu trees \cite{gomory1961multi}) can be used to enumerate all equal-cardinality cuts \cite{abboud2023all}. To discriminate among such optima, a secondary objective function must be introduced. 
\end{remark}

\begin{remark}
An optimal solution $\{\Bm,\Cm\}$ of Problem~\ref{problem:DDPOF_N} does not identify a minimal edge cut in $\Gm$, which can be instead computed via a standard min-cut from $\Dm$ to $\Tm$ in $\Gm$ after setting all the edge weights to 1 (see Example~\ref{ex:ex1}). For instance, if the optimal solution $\{\Bm,\Cm\}$ of Problem~\ref{problem:DDPOF_N} is not unique, it is possible to introduce a secondary objective function minimizing the number of edges to be cut.
\end{remark}

\section{Numerical Example}
In this numerical section we give an example of application of  Algorithms~\ref{alg:one}--\ref{alg:three} for computing the minimal input/output cardinality DDPOF.
\begin{example}
\label{ex:ex1}
Let us consider the network $ \Gm $ of $10$ nodes and $\Dm~=~\{v_1,v_2,v_3\}$ and $\Tm=\{v_8,v_9,v_{10}\}$ shown in Fig.~\ref{fig:ex1a} on the left.
Using Algorithm~\ref{alg:two} the extended network $\overline{\Gm}$ shown in Fig.~\ref{fig:ex1a} on the right is constructed.
\begin{figure}[htb!]
    \centering
    \includegraphics[clip=true, width=1\linewidth]{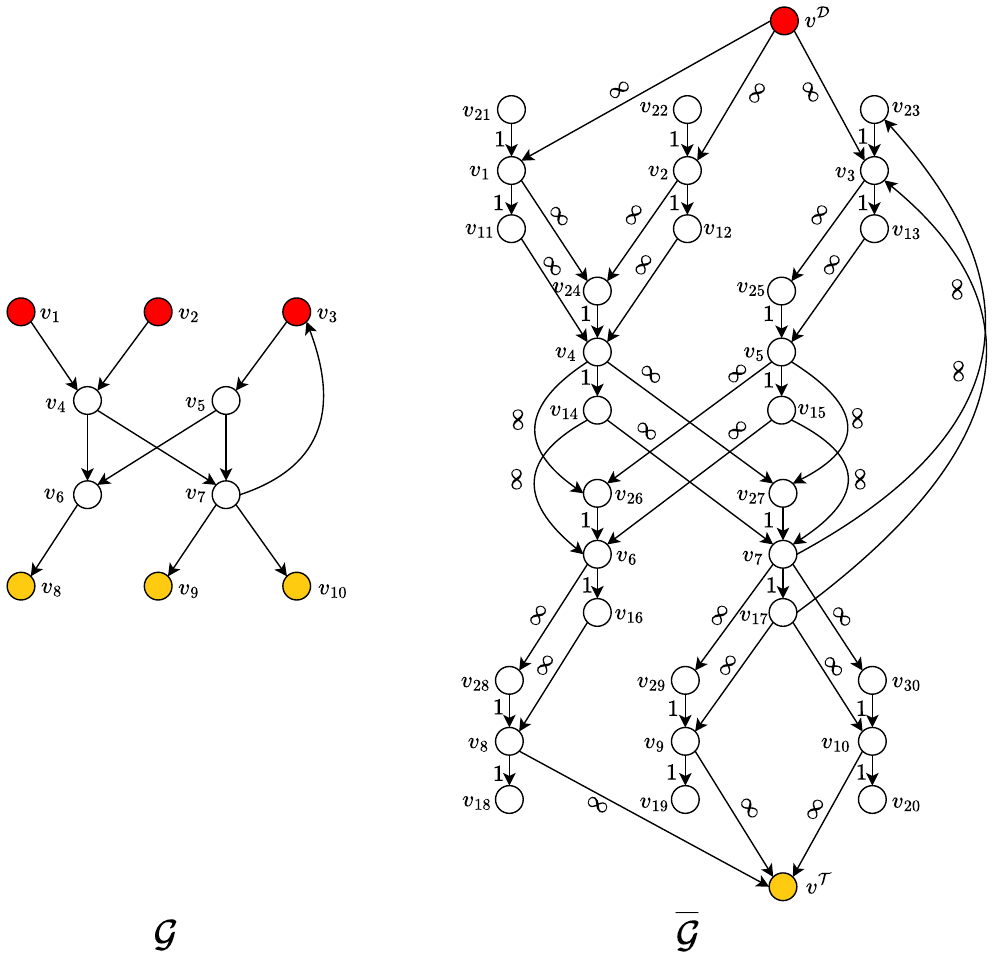}
    \caption{Example~\ref{ex:ex1}. Disturbances are in red and the targets are in yellow.}
    \label{fig:ex1a}
    %\vspace{-0.35cm}
\end{figure}
\begin{figure}[htb!]
    \centering
    \includegraphics[clip=true, width=1\linewidth]{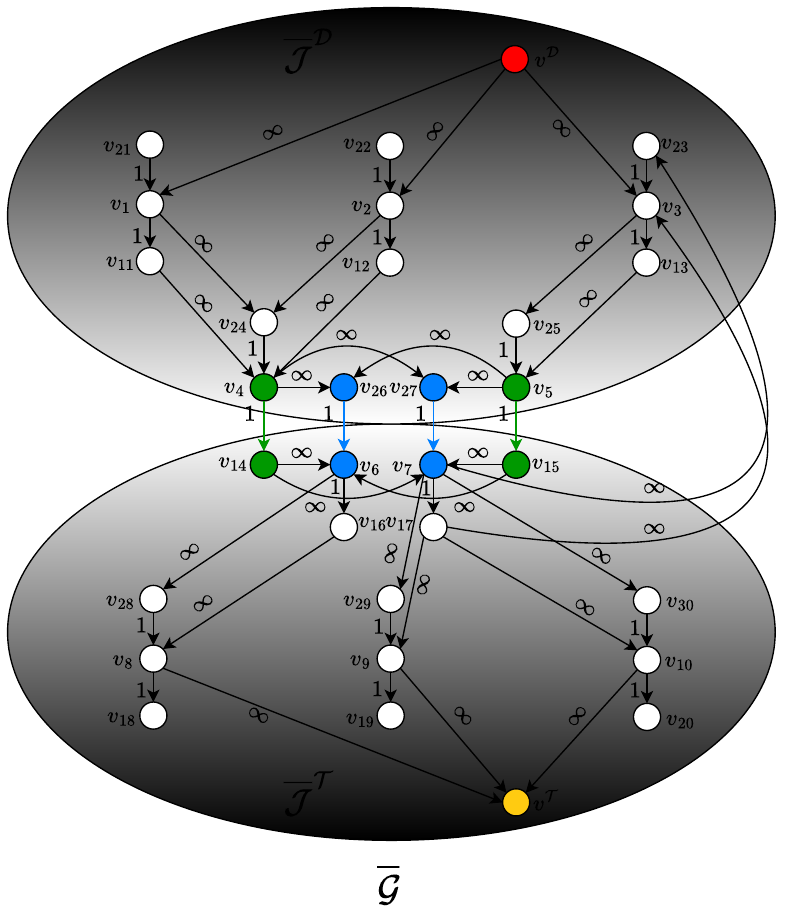}
    \caption{Example~\ref{ex:ex1}. Disturbances are in red, targets in yellow, whereas the control inputs are in blue and the output nodes are in green.}
    \label{fig:ex1b}
    %\vspace{-0.35cm}
\end{figure}\\
The minimal cardinality solution of Problem~\ref{problem:DDPOF_N} is $\Bm=\{v_6,v_7\}$, $\Cm=\{v_4,v_5\}$, as depicted in Fig.~\ref{fig:ex1b}. In fact, the set of edges $\Hm=\{(v_{26},v_{6}),\,(v_{27},v_{7}),\,(v_4,v_{14}),\,(v_5,v_{15})\}$ is the minimal cut set of edges in the extended network $\overline{\Gm}$. Note that $\hat{\Bm}=\{v_{26},v_{27}\}\subseteq\overline{\Jm}^\Dm$, and $\hat{\Cm}=\{v_{14},v_{15}\}\subseteq\overline{\Jm}^\Tm$, as in the proof of Theorem~\ref{thm:main}. Moreover, note that the run of a classical edge min-cut/max-flow from $\Dm$ to $\Tm$ in $\Gm$ would have provided the feasible solutions $\Bm=\{v_4,v_5\}$, $\Cm=\Dm$, or $\Bm=\Tm$, $\Cm=\{v_6,v_7\}$, both suboptimal w.r.t. the cost function $| \Bm|+|\Cm|$.
 \hfill \qed
\end{example}
\section{Conclusions}
Protecting strategic nodes from disturbances is a key challenge in cybersecurity and networked control. This paper addressed the DDPOF over networks by developing an algorithm that determines, in polynomial-time, the minimal cardinality sets of control input and measurement output nodes required to reject the disturbances. MATLAB functions implementing the results presented in this article are available in the GitLab repository \cite{mycode2026}.
\bibliographystyle{abbrv}
\bibliography{biblio}

\end{document}